\newtheorem{lemma}{Lemma}
\newtheorem{theorem}{Theorem}
\newtheorem{definition}{Definition}
\newtheorem{proposition}{Proposition}
\newenvironment{proof}[1][Proof]{\begin{trivlist}
\item[\hskip \labelsep {\bfseries #1}]}{\end{trivlist}}
\newcommand{\nat}{\mathbb{N}}
\newcommand{\integer}{\mathbb{Z}}
\newcommand{\Lan}{\mathcal{L}}	
\newcommand{\cA}{\mathcal{A}}
\newcommand{\cM}{\mathcal{M}}
\newcommand{\cN}{\mathcal{N}}
\definecolor{verylow}{HTML}{e75874}
\definecolor{low}{HTML}{e75874}
\definecolor{medium}{HTML}{ff8c00}
\definecolor{high}{HTML}{53A567}
\definecolor{veryhigh}{HTML}{53A567}
\newcolumntype{"}{@{\hskip\tabcolsep\vrule width 0.8pt\hskip\tabcolsep}}
\newcommand{\qed}{\hfill$\blacksquare$}
\newcommand{\RNN}{R}
\newcommand{\N}{\mathbb{N}}
\newcommand{\A}{\mathcal{A}}
\newcommand{\emptyword}{\lambda}
\renewcommand{\epsilon}{\varepsilon}
\newcommand{\dist}[2]{d(#1,#2)}
\title{Analyzing Robustness of Angluin's L$\!^*$ Algorithm\\ ~in Presence of Noise}
\author{Igor Khmelnitsky  
\institute{Université Paris-Saclay \\CNRS, ENS Paris-Saclay\\INRIA, LMF, France}
\email{igor.khme@gmail.com }
\and 
Serge Haddad
\institute{Université Paris-Saclay \\CNRS, ENS Paris-Saclay\\INRIA, LMF, France}
\email{haddad@lsv.fr}
\and 
Lina Ye
\institute{Université Paris-Saclay \\CNRS, ENS Paris-Saclay \\CentraleSupélec, LMF, France}
\email{lina.ye@centralesupelec.fr}
\and 
Benoît Barbot
\institute{Université Paris-Est Créteil\\France}
\email{benoit.barbot@u-pec.fr}
\and 
Benedikt Bollig
\institute{Université Paris-Saclay \\CNRS, ENS Paris-Saclay, LMF, France}
\email{bollig@lsv.fr}
\and
Martin Leucker
\institute{Institute for Software Engineering and \\ Programming Languages \\ Universität zu
Lübeck, Germany}
\email{leucker@isp.uni-luebeck.de}
\and
Daniel Neider
\institute{
Carl von Ossietzky\\ University of Oldenburg\\Germany}
\email{daniel.neider@uol.de}
\and
Rajarshi Roy
\institute{Max Planck Institute\\ for Software Systems \\ Kaiserslautern, Germany}
\email{rajarshi@mpi-sws.org}
}
\begin{document}
\maketitle

\begin{abstract}
          Angluin's L$^\ast$ algorithm learns the minimal (complete)
          deterministic finite automaton (DFA) of a regular language
          using membership and equivalence queries. Its probabilistic
          approximatively correct (PAC) version substitutes an
          equivalence query by a large enough set of random membership
          queries to get a high level confidence to the answer. Thus
          it can be applied to any kind of (also non-regular) device
          and may be viewed as an algorithm for synthesizing an
          automaton abstracting the behavior of the device based on
          observations.  Here we are interested on how Angluin's PAC
          learning algorithm behaves for devices which are obtained
          from a DFA by introducing some noise. More precisely we
          study whether Angluin's algorithm reduces the noise and
          produces a DFA closer to the original one than the noisy
          device.  We propose several ways to introduce the noise: (1)
          the noisy device inverts the classification of words w.r.t.\
          the DFA with a small probability,  (2) the noisy
          device modifies with a small probability the letters of the
          word before asking its classification w.r.t.\ the DFA, and (3) the
          noisy device combines the classification of a word w.r.t.\ the DFA
          and its classification w.r.t. a counter automaton.  Our experiments
          were performed on several hundred DFAs.
          Our main contributions, bluntly stated, consist in showing that: (1)
          Angluin's algorithm behaves well whenever the noisy device 
         is produced by a random process, (2) but poorly with a structured noise, and, that (3)
         almost surely randomness
         yields systems with non-recursively enumerable languages. 
\end{abstract}


\section{Introduction}
\label{sec:intro}

\paragraph{Discrete-event systems and their languages.}

Discrete-event systems \cite{Cassandras10} form a large class of dynamic systems that, given some internal state, evolve from one state to another one due
to the occurrence of an event. For instance, discrete-event systems can represent a  cyber-physical process whose events are triggered 
by a controller or the environment, or, a business process whose events are triggered by human activities or software executions.
Often, the behaviors of such systems are classified as safe (aka correct, representative, etc.) or unsafe. Since a behavior may be identified
by its sequence of occurred events, this leads to the notion of a language.

\paragraph{Analysis versus synthesis.}

There are numerous formalisms to specify (languages of) discrete-event systems.
From a designer's perpective, the simpler it is the better its analysis will be. So finite automata and their languages (regular languages)
are good candidates for the specification. However, even when the system is specified by an automaton, its implementation may slightly differ
due to several reasons (bugs, unplanned human activities, unpredictable environment, etc.). Thus, one generally checks whether 
the implementation conforms to the specification. However, in many contexts, the system has already been implemented and the original specification
(if any) is lost, as for instance in the framework of process mining \cite{vanderalst12}. Thus, by observing and interacting with the system, one aims to recover a specification close to the system at hand but that is robust with respect to its pathologic behaviors. 

\paragraph{Language learning.}

The problem of learning a language from finite samples of strings by discovering the corresponding grammar is known as grammatical inference. Its significance was initially stated in~\cite{Solomonoff64} and an overview of very first results can be found in~\cite{Biermann72}. As it may not always be possible to infer a grammar that exactly identifies a language, approximate language learning was introduced in~\cite{Wharton1974}, where a grammar is selected from a solution space whose language approximates the target language with a specified degree of accuracy.  To provide a deeper insight into language learning, the problem of identifying a (minimal) deterministic finite automaton   (DFA) that is consistent with a given sample has attracted substantial attention in the literature since several decades~\cite{Gold1978, Angluin87, Valiant84}.  An understanding of regular language learning is very valuable for a generalization to other more complex classes of languages.

\paragraph{Angluin's L$\!^{*}$ algorithm.}

Angluin's L$\!^{*}$ algorithm learns the minimal DFA of a regular language using membership and equivalence queries. Thus, one could try to adapt it to the synthesis task described above. However, for most black box systems, it is almost impossible to implement the equivalence query.  Thus, its probabilistic approximatively correct (PAC) version substitutes an equivalence query with a large enough set of random membership queries. However, one needs to define and evaluate the accuracy of such an approach. Thus, here we are interested in how PAC Angluin's algorithm behaves for devices which are obtained from a DFA by introducing some noise.

\paragraph{Noisy learning.}

Most learning algorithms in the literature assume the correctness of the training data, including the example data such as attributes as well as classification results. However, sometimes noise-free datasets are not available. \cite{Quinlan86} carried out an experimental study of the noise effects on the learning performance. The results showed that generally the classification noise had more negative impact than the attribute one, i.e., errors in the values of attributes.     
\cite{AngluinL87} studied how to compensate for randomly introduced noise and discovered a theorem giving a bound on the  smaple size that is sufficient  for PAC-identification in the presence of classification noise when the concept classes are finite.  
Michael Kearns formalized another related learning model from statistical queries by extending Valiant's learning model~\cite{Kearns98}. One main result shows that any class of functions learnable from this statistical query model is also learnable with classification noise in Valiant's model. 

\paragraph{Our contribution.}

In this paper, we study against which kinds of noise Angluin's algorithm\footnote{In this work by ``Angluin's algorithm'' we refer to the optimized version from \cite{Kearns94}.} is \emph{robust}. To the best of our knowledge, this is the very first attempt of noise analysis in the automata learning setting.
More precisely, we consider the following setting (cf.\ Figure~\ref{fig:General}): Assume that a regular device $\cA$ is given, typically as a black box. Due to some noise $\cN$, the system $\cA$ is pertubed resulting in a not necessarily regular system $\cM_\cN$. This one is consulted by the PAC version of $L^\ast$ to obtain a regular system $\cA_E$. The question studied in this paper is whether $\cA_E$ is closer to $\cA$ than $\cM_\cN$, or, in other words, to which extent learning via $L^\ast$ is robust against the noise $\cN$.
\vspace{1cm}
\begin{figure}[h]
	\centering
	\includegraphics[scale=0.9]{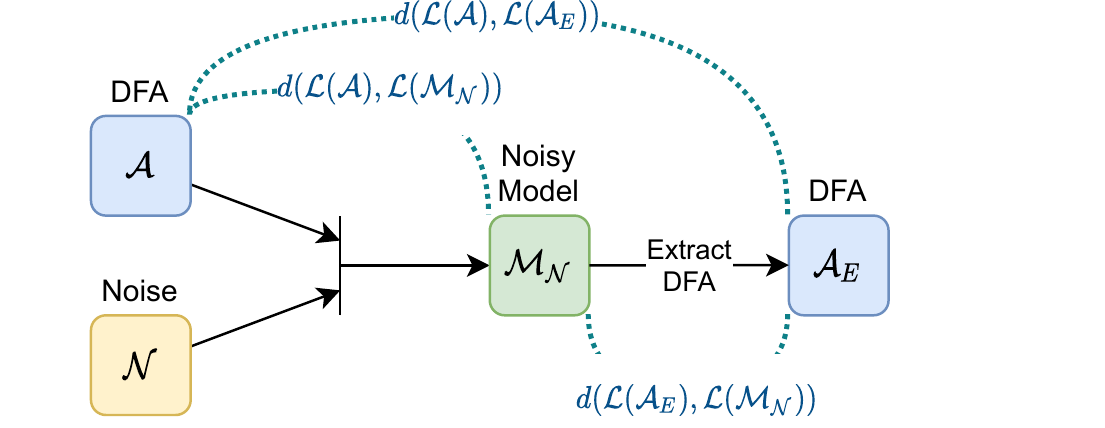}
	\label{fig:General}
	\caption{The experimental setup and the studied distances}
\end{figure}
\vspace{1cm}
To this end, we introduce three kinds of \emph{noisy devices} obtained from the DFA $\cA$: (1) the noisy device is obtained by a random process from a given DFA by 
inverting the classification of words with a small probability, which corresponds to the classification noise in the classical learning setting,  
(2) the noisy device is obtained by a random process that, with a small probability, replaces each letter of a word by one chosen uniformly from the alphabet and then determines its classification based on the DFA, which corresponds 
to the attribute noise in the classical setting,  and
(3) the noisy  DFA combines the classification of a word w.r.t. the DFA and its status w.r.t.\ a counter automaton.
Our studies are based on the distribution over words that is used for generating words associated with membership queries 
and defining (and statistically measuring) the \emph{distance}
between two devices as the probability that they differ on word acceptance. 
We have performed experiments
over several hundreds random DFA. 
We have pursued several goals along our experiments, expressed by the following questions:
\vspace{0.3cm}
\begin{itemize}[]
\setlength\itemsep{0.5em}
	\item What is the threshold (in terms of distance) between pertubating the DFA or producing a device that is no more ``similar to'' the DFA?
	\item What is the impact of the nature of noise on the robustness of  Angluin's algorithm?
	\item What is the impact of the words distribution on the robustness of  Angluin's algorithm?
	
\end{itemize}
\vspace{0.3cm}
Due to the approximating nature of the PAC version of $L^\ast$, we had to consider the question of how to choose the accuracy  of the approximate equivalence query to get a good trade-off between accuracy and efficiency. Moreover, since in most cases,  Angluin's algorithm may perform a huge number of refinement rounds before a possible termination, we considered what a ``good'' number of rounds to stop the algorithm avoiding underfitting and overfitting is.

We experimentally show that w.r.t. the random noise, i.e., the noise introduced with a small probability in different ways, Angluin's algorithm behaves quite well, i.e., the learned DFA ($\cA_E$) is very often closer to the original one ($\cA$) than the noisy random device ($\cM_\cN$). When the noise is obtained using the counter automaton, Angluin's algorithm is not robust. Instead, the device $\cA_E$ is closer to the noisy device $\cM_\cN$. Moreover, we establish that the expectation of the length of a random word should be large enough to cover a relevant part of the set of words in order for Angluin's algorithms to be robust. 

In order to understand why  Angluin's algorithm is robust w.r.t.\ random noise we have undertaken a theoretical study establishing that
almost surely the language of the noisy device  ($\cM_\cN$) for case (1) and, with a further weak assumption, also for case (2) is not recursively enumerable. Considering non-recursively enumerable languages as unstructured, this means that due to the noise, the (regular) structure of $\cA$ vanishes. This is not the case for the counter automaton setting. Altogether, to put it bluntly: the less structure the noisy device has, the better Angluin's algorithm works.

\paragraph{Organization.}

In Section~\ref{sec:preliminaries}, we introduce the technical background required for the robustness analysis.
In Section~\ref{sec:algorithm}, we detail the goals and the settings of our analysis. In Section~\ref{sec:evaluation}, we provide and discuss the experimental results. 
In Section~\ref{sec:theory}, we discuss randomness versus structure. Finally in Section~\ref{sec:con}, we draw our the conclusions and identify future work.
	
\let\ltxtab\tabular
\let\ltxendtab\endtabular
\renewenvironment{tabular}{\setlength{\extrarowheight}{5pt}\setlength{\belowrulesep}{2pt}%
\ltxtab}{\ltxendtab}


\section{Preliminaries} 
\label{sec:preliminaries}
 
Here we provide the technical background required for the robustness analysis.

\paragraph{Languages.}

Let $\Sigma$ be an alphabet, i.e., a nonempty finite set,
whose elements are called \emph{letters}. A \emph{word} $w$ over $\Sigma$
is a finite sequence over $\Sigma$, whose length is denoted by $|w|$.
The unique word of length $0$ is called the \emph{empty word} and denoted by $\emptyword$.
As usual, $\Sigma^\ast$ is the set of all words over $\Sigma$,
and $\Sigma^+ = \Sigma^\ast \setminus \{\emptyword\}$ is
the set of words of positive length.
A \emph{language} (over $\Sigma$)
is any set $L \subseteq \Sigma^\ast$.
The symmetric difference of languages $L_1,L_2 \subseteq \Sigma^\ast$
is defined as $L_1 \Delta L_2 = (L_1 \setminus L_2) \cup (L_2 \setminus L_1)$.

\paragraph{Words distribution and measure of a language.}

A distribution $D$ over $\Sigma^*$ is defined by a mapping ${\bf Pr}_D$
from $\Sigma^*$ to $[0,1]$ such that $\sum_{w\in \Sigma^*}{\bf Pr}_D(w)=1$.
Let $L$ be a language. Its probabilistic measure w.r.t. $D$, ${\bf Pr}_D(L)$ 
is defined by ${\bf Pr}_D(L)=\sum_{w\in L}{\bf Pr}_D(w)$.

Our analysis requires that we are able to efficiently  sample a word according to 
some $D$. Thus we only consider  distributions $D_\mu$ with  $\mu\in\ ]0,1[$,
that are defined for a word $w = a_1 \ldots a_n \in \Sigma^\ast$ by
\[{\bf Pr}_{D_\mu}(w) = \mu \left(\frac{1-\mu}{|\Sigma|}\right)^n\,.\]
To sample a random word according to $D_\mu$ in practice, we start with the empty word
and iteratively we flip a biased coin with probability $1-\mu$ to add a letter (and $\mu$ to return the current word)
and then uniformly select the letter in $\Sigma$.

\paragraph{Language distance.}

Given languages $L_1$ and $L_2$, their distance w.r.t.\ a distribution $D$,
$d_D(L_1,L_2)$, is defined by $d_D(L_1,L_2)={\bf Pr}_D(L_1 \Delta L_2)$.  Computing the distance between
languages is in most of the cases impossible. Fortunately whenever the membership problem
for $L_1$ and $L_2$ is decidable, then using Chernoff-Hoeffding bounds~\cite{Hoff63}, this distance can be statistically approximated as follows.
Let $\alpha,\gamma > 0$ be an error parameter and a confidence level, respectively.
Let $S$ be a set of words sampled independently according to $D$, called a sampling, such that  $|S| \geq \frac{\log(2 / \gamma )}{ 2\alpha^2}$.
Let
$dist =\frac{|S \cap (L_1 \Delta L_2)|}{|S|}$.
Then, we have
\[
{\bf Pr}_D(|d_D(L_1,L_2) - dist| > \alpha ) ~<~ \gamma\,.
\]
Since we will not simultaneously discuss about multiple distributions,
we omit the subscript $D$
almost everywhere.

\paragraph{Finite Automata.}

A (complete) deterministic finite automaton (DFA) over $\Sigma$ is a tuple
$\A = (Q,\sigma,q_0,F)$ where $Q$ is a finite set
of states, $q_0 \in Q$ is the initial state,
$F \subseteq Q$ is the set of final states, and
$\sigma: Q \times \Sigma \to Q$ is the transition function.
The transition function is inductively extended over words by 
 $\sigma(q,\emptyword)=q$ and  $\sigma(q,wa)=\sigma(\sigma(q,w),a)$.
The language of $\A$ is defined as
$\Lan(\A) = \{w \in \Sigma^\ast \mid \sigma(q_0,w) \in F\}$.
A language $L \subseteq \Sigma^\ast$ is called \emph{regular} if
$L = \Lan(\A)$ for some DFA $\A$.

\newcommand{\tprob}{p}
\newcommand{\ntprob}{p}
\newcommand{\letterprop}{p}
\newcommand{\width}{\epsilon}
\newcommand{\confidence}{\gamma}
\newcommand{\letterprobp}[1]{p_{#1}}

\paragraph{PAC version of Angluin's L$\!^{*}$ algorithm.}

Given a regular language $L$, Angluin's L$\!^{*}$ algorithm learns the unique mimimal DFA $\A$ such that $\Lan(\A)=L$
using only membership queries `Does $w$ belong to $L$?' and equivalence queries `Does $\Lan(\A_E)=L$? and 
if not provide a word $w\in L\Delta \Lan(\A_E)$'. An abstract version of this algorithm is depicted by Algorithm~\ref{algo:PAC}.
The main features of this algorithm are: a data structure $Data$ from which {\tt Synthetize}($Data$) returns an automaton $\A_E$
and such that given a word $w \in L\Delta \Lan(\A_E)$, {\tt Update}($Data,w)$ updates $Data$. The number of states of $\A_E$ is incremented by one
after each round and so the algorithm terminates after its number of states is equal to the (unknown) number of states of $\A$.

The Probably Approximately Correct (PAC) version of Angluin's L$\!^{*}$ algorithm takes as input an error parameter $\varepsilon$ and a confidence level $\delta$, and 
replaces the equivalence query by a number of membership queries `$w \in  L\Delta \Lan(\A)$?' where the words are sampled from some distribution $D$ unknown to the algorithm. Thus this algorithm can stop too early when all answers are negative while $L\neq \Lan(\A)$. However due to the number of such queries which depends on the current round $r$ (i.e., $\lceil \frac{log(1/\delta)+(r+1)\log(2)}{\varepsilon}\rceil$) this algorithm ensures that
\[
{\bf Pr}_D(d_D(L,\Lan(\A)) > \epsilon ) ~<~ \delta\,.
\]

A key observation is that this algorithm could be used for every language $L$ for which the membership problem is decidable. However since
$L$ is not necessarily a regular language, the algorithm might never stop and thus our adaptation includes a parameter $maxround$ that ensures termination.

\begin{algorithm}[h]
        \DontPrintSemicolon
        
        \KwIn{$L$, a language unknown to the algorithm}
        \KwIn{an integer $maxround$ ensuring termination}
        \SetKwFunction{Angluin}{Angluin}

	\BlankLine
	{\Angluin}$()$
    
	\BlankLine
	\KwData{an integer $r$, a boolean $b$ and a data structure $Data$}
	\KwOut{a DFA $\A_E$}
	\BlankLine
	
	{\tt Initialize}($Data$)\;
	$r \leftarrow 0$\;
	\BlankLine
	
	\tcp{The control of $maxround$ is unnecessary when $L$ is regular}
	\While{$r < maxround$}
	{
	 $\A_E \leftarrow $ {\tt Synthetize}($Data$)\;
	 
	 $(b,w) \leftarrow$ {\tt IsEquivalent}($\A_E$)
	  
	\lIf{$b$}{\Return $\A_E$} 
	 {\tt Update}($Data,w$)\;
	 $r\leftarrow r+1$
	}
	
	 \Return  {\tt Synthesize}($Data$)

	\caption{Angluin's L$\!^*$ algorithm}
	\label{algo:PAC}    
\end{algorithm}


\section{Robustness Analysis}
\label{sec:algorithm}
\subsection{Principle and goals of the analysis}

\paragraph{Principle of the analysis.}
Figure~\ref{fig:General} illustrates the whole process of our analysis. 
First we set the qualitative and quantitative nature of the noise ($\mathcal N$). 
Then we generate a set of random DFA ($\A$).
Combining $\A$ and $\mathcal N$, one gets a noisy model $\mathcal M_{\mathcal N}$.
More precisely, depending on whether the noise is random or not,  
$\mathcal M_{\mathcal N}$ is either generated off-line (deterministic noise) or on-line
(random noise) when a membership query is asked during Angluin's L$\!^*$ algorithm. 
Finally we compare (1) the distances between $\A$ and $\mathcal M_{\mathcal N}$, 
and (2) between $\A$ and $\mathcal  \A_E$, the automaton returned by
the algorithm. The aim of this comparison is to establish whether
$\A_E$ is closer to $\A$ than $\mathcal M_{\mathcal N}$. 
In order to get a quantitative measure, we define the  \emph{information gain} as: 
\[
	\text{Information gain} = \frac{d(\Lan(\A),\Lan(\mathcal M_{\mathcal N}))}{d(\Lan(\A),\Lan(\A_E))}
\] 
We consider a {\bf \color{low} low} information gain to be in $[0
,0.9)$, a {\bf\color{medium} medium} information gain to be in $[0.9
,1.5)$, and a  {\bf\color{high} high} information gain to be in $[1.5 ,\infty)$. To make high information gain more evident, we set its threshold value as 1.5. 

In addition,
we also evaluate the distance between  $\A_E$ and $\mathcal M_{\mathcal N}$
in order to study in which cases the algorithm learns in fact the noisy device instead of the original DFA.

\paragraph{Goals of the analysis.}

\begin{itemize}[]
	\item {\bf Quantitative analysis.} The information gain highly depends on the `quantity' of the noise, i.e., error rate. So we analyze the information gain 
	depending of the distance between the original DFA and the noisy device and want to identify
	a threshold (if any) where  the information gain starts to significantly increase.
	\item {\bf Qualitative analysis.} Another important criterion of the information gain is the `nature' of the noise. So we analyze the information gain w.r.t. the three noisy devices that we have introduced.
	\item{\bf Impact of word distribution.} Finally, the
          robustness of the L$^\ast$ algorithm with respect to word distribution is also analyzed.
\end{itemize}

 In order to perform relevant experiments, one needs to tune two critical parameters of Angluin's L$^\ast$ algorithm. Since  the running time of the algorithm quadratically depends on
	the number of rounds (i.e. iterations of the loop), selecting an appropriate \emph{maximal number of rounds} is a critical issue.
	We vary this maximal number of rounds and analyze how the information gain decreases
	w.r.t. this number.
As an equivalence query is replaced with a set of membership queries whose number depends
	on the current round and the pair $(\varepsilon,\delta)$, it is thus interesting to study (1) what is the
	effect of \emph{accuracy of the approximate equivalence queries}, i.e., the values of  $(\varepsilon,\delta)$ on the ratio of executions that reach the maximal number of rounds
	and (2) compare the  information gain  for executions that stop before reaching this maximal number
	and the same execution when letting it run up  to this maximal number.

\subsection{Noise}

\newcommand{\RL}{R}

A \emph{random language} $R \subseteq \Sigma^\ast$ is determined by
a random process: for each $w \in \Sigma^\ast$, membership $w \in \RL$
is determined independently at random, \emph{once and for all}, according to
some probability ${\bf Pr}(w \in R) \in [0, 1]$. The probability ${\bf Pr}(w \in R)$
may depend on some parameters such as $w$ itself and a given DFA.

We now describe the three kinds of noise that we analyze in this paper.
Each type adds noise to a given DFA $\A$ in form of a random language $\RL$.
For the first two types, \emph{noise with output} and \emph{noise with input},
the probability ${\bf Pr}(w \in R)$ of including $w \in \Sigma^\ast$ in $\RL$ depends on $w$
itself, $\Lan(\A)$, and some parameter $0<p<1$.
The third kind of noise, \emph{counter DFA}, is actually \emph{deterministic},
i.e., ${\bf Pr}(w \in R) \in \{0,1\}$ for all $w \in \Sigma^\ast$. In that case,
the given DFA $\A$ determines a unique ``noisy'' language.
Let us be more precise:

\paragraph{DFA with noisy output.}

Given a DFA $\A$ over the alphabet $\Sigma$ and $0<p<1$, the random
language $\Lan(\A^{\rightarrow p})$ flips the classification of words w.r.t.\ $\Lan(\A)$ with probability $p$.
More formally, for all $w \in \Sigma^\ast$, $${\bf Pr}(w\in \Lan(\A^{\rightarrow p}))=(1-p){\bf 1}_{w\in\Lan(\A)}+p{\bf 1}_{w \not\in\Lan(\A)}$$ 
where ${\bf 1}_{C}$ is 1 if condition $C$ holds, and 0 otherwise.
Observe that the expected value of the distance  
$d(\Lan(\A),\Lan(\A^{\rightarrow p}))$ is $p$. Moreover, in our experiments, 
we observe that $\left|\frac{d(\Lan(\A),\Lan(\A^{\rightarrow p}))-p}{p}\right|<5\cdot 10^{-2}$ for all the generated languages.

\paragraph{DFA with noisy input.}

Given a DFA $\A$ over the alphabet $\Sigma$ (with $|\Sigma|>1$) and $0<p<1$, the random language
$\Lan(\A^{\leftarrow p})$ 
changes every letter of the word with probability $p$
uniformly to another letter and then returns the classification of the new word w.r.t.\ $\Lan(\A)$. More formally, for $w=a_1\ldots a_n \in \Sigma^\ast$,
$${\bf Pr}(w\in \Lan(\A^{\leftarrow p}))=\sum_{\substack{w'=b_1\ldots b_n\in\Lan(\A) \\\textup{s.t. }|w|=|w'|}}~\prod_{1 \le i\leq n} \Bigl((1-p){\bf 1}_{a_i=b_i}+\frac{p}{|\Sigma|-1} {\bf 1}_{a_i\neq b_i}\Bigr)\,.$$

{\noindent \bf Counter DFA.}
Let $\A$  be a DFA over the alphabet $\Sigma$ and $c: \Sigma \cup \{\lambda\} \to \integer$ be a function.
We inductively define the function $\overline{c}: \Sigma^* \to \integer$ by
$$ 
	\overline{c}(\lambda)= c(\lambda) \mbox{ and } \overline{c}(wa)= \overline{c}(w)+c(a)\,.
$$
The counter language $\Lan(\A_c)$ is now given as
$$\Lan(\A_c) = \Lan(\A) \cup \{w \in \Sigma^\ast \mid \overline{c}(w)\leq 0\}\,.$$


\section{Experimental Evaluation}
\label{sec:evaluation}

In order to empirically evaluate our ideas, we have implemented a prototype and benchmarks in Python, using the NumPy library. 
They are available on GitHub\footnote{\url{https://github.com/LeaRNNify/Noisy_Learning}}.
All evaluations were performed on a computer equipped by Intel i5-8250U CPU with 4 cores, 16GB of memory and Ubuntu Linux~18.03.

\subsection{Generating DFAs}

We now describe the settings of the experiments we made with three different types of noises. 
We choose $\mu=10^{-2}$ for the parameter of the word distribution so that the average length
of a random word is $99$.
All the statistic distances were computed using the Chernoff-Hoeffding bound~\cite{Hoff63} 
with $\alpha = 5\cdot 10^{-4}$ as error parameter and $\gamma = 10^{-3}$ as confidence level. 

The benchmarks were performed on  DFA randomly generated using the following procedure.
Let $M_q=50$ and $M_a=20$ be two parameters, which impose upper bounds on the number of states and of the alphabet, that could be tuned in future experiments. 
The DFA $\A=(Q,\sigma,q_0,F)$ on $\Sigma$ is generated as follows:
\begin{itemize}
	\item Uniformly choose $n_q\in[10,M_q]$ and $n_a\in[3,M_a]$;
	\item Set $Q = [0,n_q]$ and $\Sigma = [0,n_a]$;
	\item Uniformly choose $n_f\in[0,n_q-1]$ and let $F = [0,n_f]$;
	\item Uniformly choose $q_0$ in $Q$;
	\item For all $(q,a)\in Q\times\Sigma$, choose the target state $\sigma(q,a)$ uniformly among all states.
\end{itemize}
The choice of $M_q$ and $M_a$ was inspired by observing that these values often occur when
 modeling  realistic processes like in business process 
management.

\subsection{Tunings} \label{subsec:tunings}

Before launching our experiments, we first tune two key parameters for both efficiency and accuracy purposes:
the maximal number of rounds of the algorithm and the accuracy of the approximate equivalence query.
This tuning is based on experiments over the DFA with the noisy output since the expected distance between the DFA and the noisy device is known ($p$),
thus simplifying the tuning.

\paragraph{Maximal number of rounds.}

In order to specify a maximal number of rounds that lead to the good performances of the Angluin's Algorithm, we took a DFA with noisy output 
$\A^{\rightarrow p}$ for $p\in\{0.005,0.0025,0.0015,0.001\}$.
We ran the learning algorithm, stopping every 20 rounds to estimate the distance between the current DFA $\A_E$ to the original DFA $\A$.
Figure~\ref{fig:roundAnalysis1} shows the evolution graphs of  $d(\Lan(\A),\Lan(\A_E))$ w.r.t. the number of rounds 
according to the different values of $p$ each of them summarizing five runs on five different DFAs. 
The vertical axis corresponds to the distance to original DFA $\A$, and the horizontal axis corresponds to  the number of rounds. 
The red line is the distance with $\A^{\rightarrow p}$, and the blue line is the distance with $\A_E$.
\begin{figure}[h]
	\centering
	\includegraphics[scale=0.36]{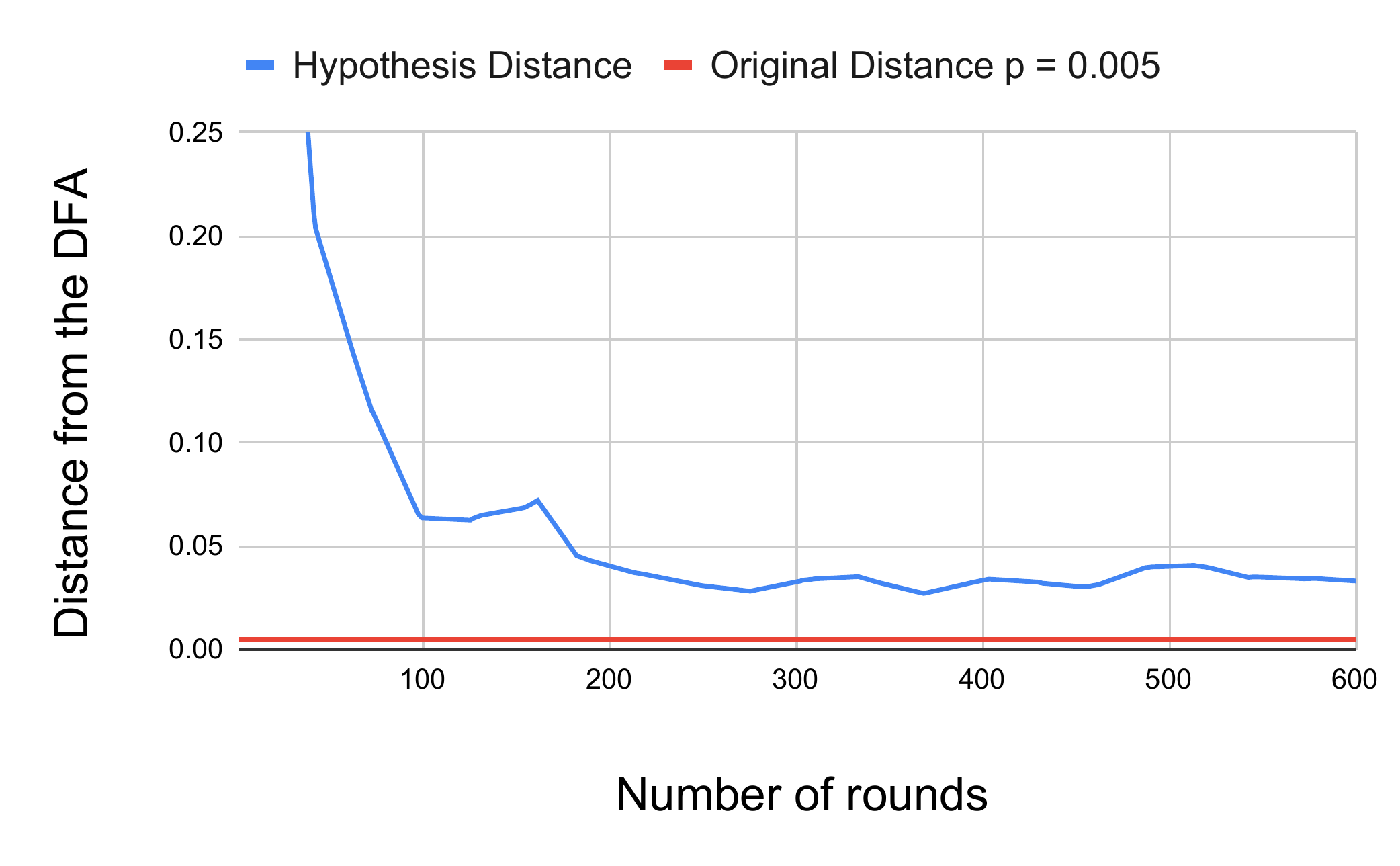}
	\includegraphics[scale=0.36]{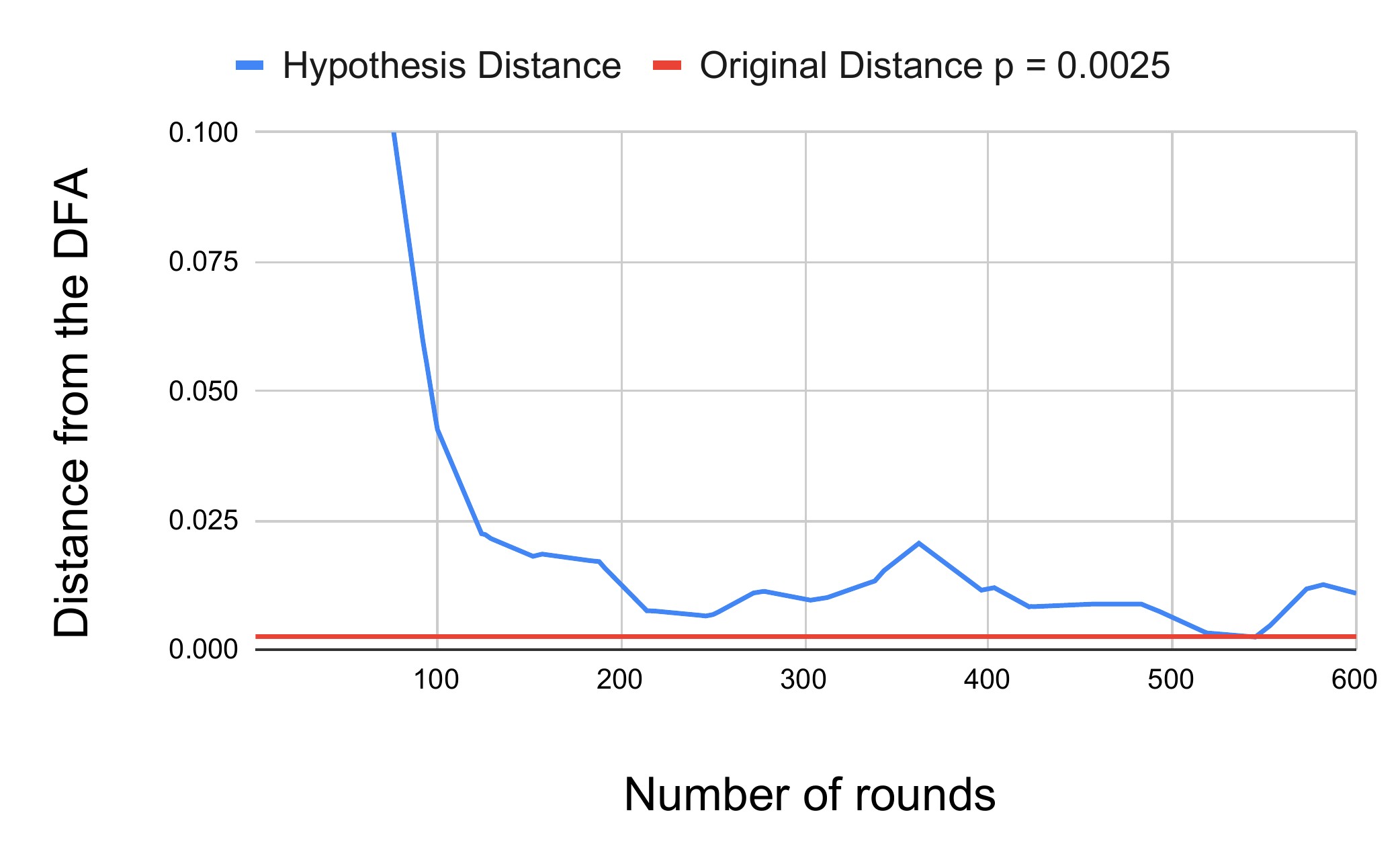}
	\includegraphics[scale=0.36]{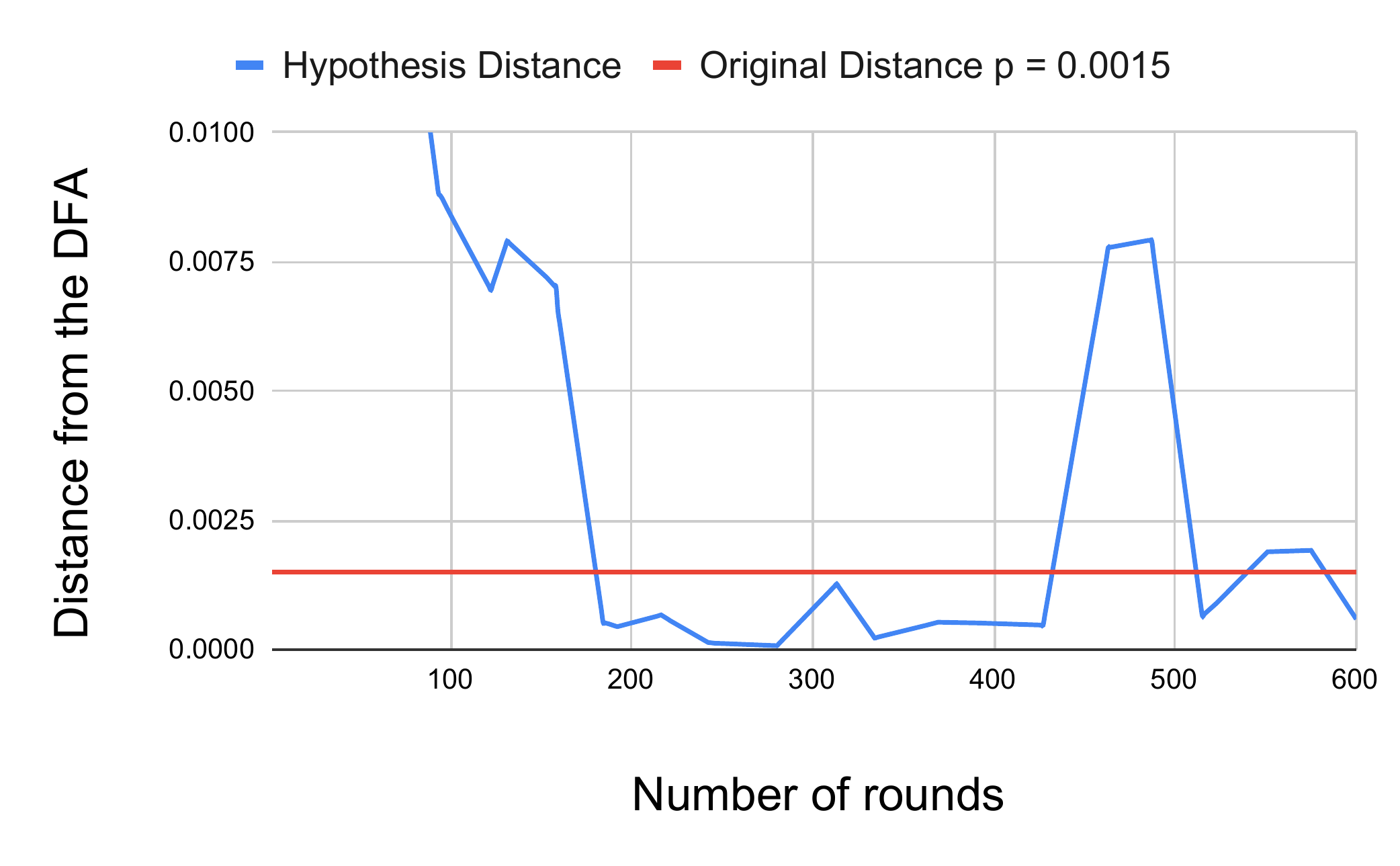}
	\includegraphics[scale=0.36]{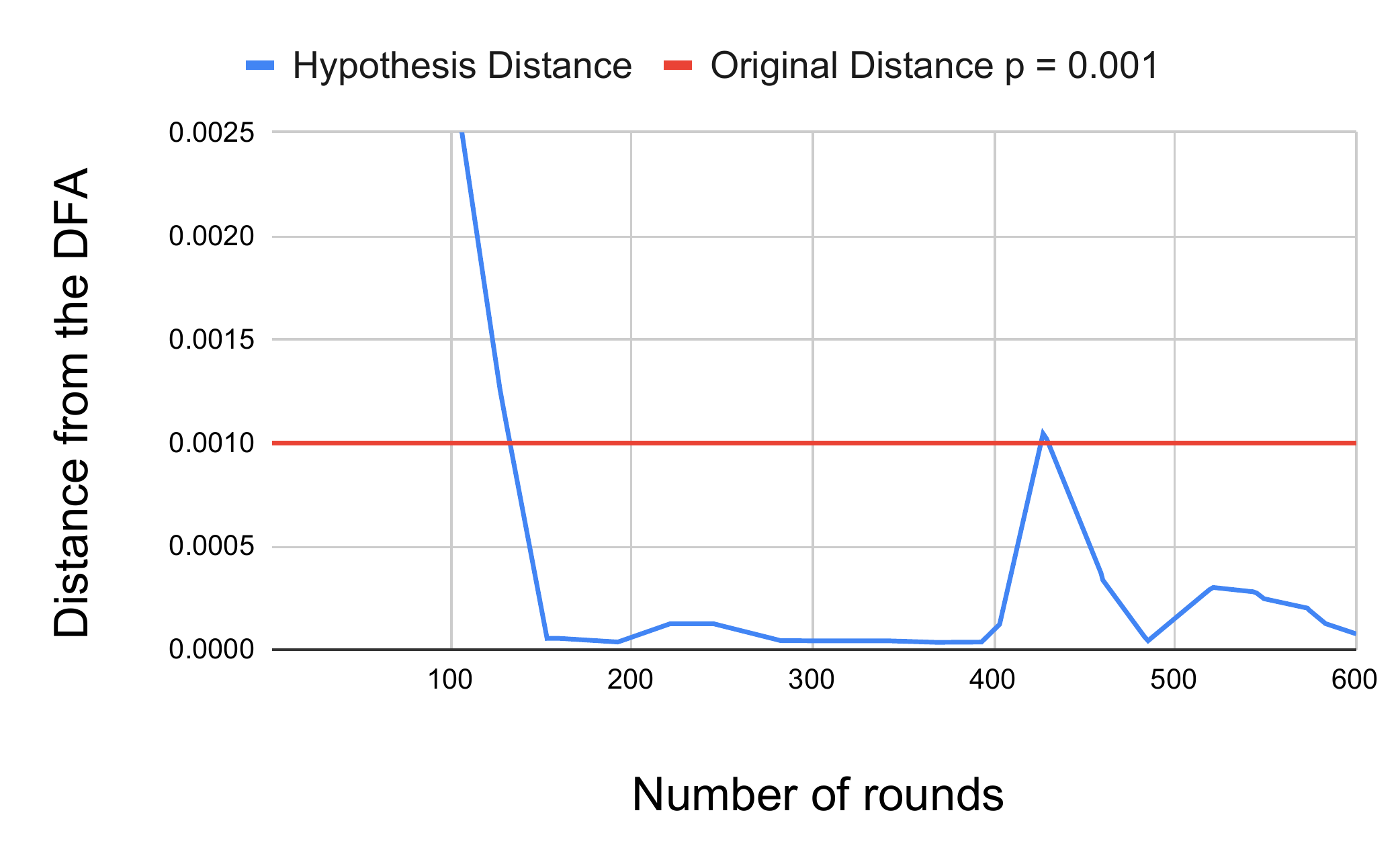}
	\caption{Number of rounds analysis}
	\label{fig:roundAnalysis1}
\end{figure}
We observe that after about 250 rounds $d(\Lan(\A),\Lan(\A_E))$ is stabilizing except some rare peaks, which are worth further investigation. 
Therefore, from now on all the experiments are made with a maximum of 250 rounds.
Of course this number depends on the size of $\A$ but for the variable size that we have chosen (between 10 and 50 states)
it seems to be a good choice. 

\paragraph{Accuracy of the approximate equivalence query.}
We have generated thirty-five DFA and for each of them we generated five $\A^{\rightarrow p}$ with different values of $p$.
Table~\ref{tbl:epsilonAndDelatEpirementet} summarizes our results with different $\epsilon$ and $\delta$ for the approximate equivalence query. 
The rows correspond to the value of the noise $p$, the columns correspond to the values of $ \epsilon $ and $ \delta $ (where we always choose $\epsilon=\delta$) and each cell shows the average information gain. Looking at this table, 
$\varepsilon=\delta=0.01$ and $\varepsilon=\delta=0.005$ seem to be optimal values. We decided to fix  $\varepsilon=\delta=0.005$
for all our experiments.

	\begin{table}
		\centering
		\scalebox{.8}{
		\begin{tabular}{l*{5}{@{\hskip1em}r}}
			\toprule
			\backslashbox{$p$}{$\varepsilon=\delta$} & $0.05$ & $0.01$ & $0.005$ & $0.001$ & $0.0005$
			\tabularnewline
			\midrule
			$0.01$ &\cellcolor{low}$0.081$ & \cellcolor{low}$0.054$ & \cellcolor{low}$0.047$ & \cellcolor{low}$0.048$ & \cellcolor{low}$0.050$ \tabularnewline
			$0.005$ & \cellcolor{low}$0.086$ & \cellcolor{low}$0.087$ & \cellcolor{low}$0.072$ & \cellcolor{low}$0.070$ & \cellcolor{low}$0.094$ \tabularnewline
			$0.0025$ & \cellcolor{low}$0.867$ & \cellcolor{low}$0.292$ & \cellcolor{low}$0.591$ &\cellcolor{low} $0.321$ & \cellcolor{low}$0.748$ \tabularnewline
			$0.0015$ & \cellcolor{medium}$1.401$ & \cellcolor{high}$2.933$ & \cellcolor{high}$3.082$ & \cellcolor{medium}$0.980$ & \cellcolor{low}$0.710$ \tabularnewline
			$0.001$ & \cellcolor{high}$5.334$ & \cellcolor{high}$4.524$ & \cellcolor{high}$3.594$ & \cellcolor{high}$1.811$ & \cellcolor{high}$6.440$ \tabularnewline
			\bottomrule
		\end{tabular}
		}
		\caption{ Evaluation of the impact of $\epsilon$ and $\delta$. }\label{tbl:epsilonAndDelatEpirementet}
	\end{table}

\subsection{Qualitative and Quantitative analysis}
For the three types of noise we have generated numerous DFA (as described shortly above), and for each DFA we have generated several noisy devices depending on the `quantity' of noise.
By computing the (average) information gain for all these experiments, we have been able to get conclusions about the effect of the nature and 
the quantity of the noise on the performance of Angluin's algorithm.

When Angluin's algorithm is applied to a noisy device, a corresponding
random language is generated on-the-fly: once membership of a word in the
target language has been determined (e.g., through a membership query),
the corresponding truth value is stored and not changed anymore.

\paragraph{DFA with noisy output.}
We have generated fifty DFA, and for each such DFA $\A$, we have generated random languages
with noisy output $\Lan(\A^{\rightarrow p})$ with five values for $p$ between $0.01$ and $0.001$.
Table~\ref{tbl:noisyoutput} summarizes the results.
Recall that the expected value of $d(\mathcal L(\A),\mathcal L(\A^{\rightarrow p}))$ is $p$.
We have identified a threshold for $p$ between $0.0015$ and $0.0025$:
if the noise is above $0.0025$ the resulting DFA $\mathcal{A}_E$ has a bigger distance to the original one $\A$ than $\A^{\rightarrow p}$, and smaller if the noise is under $0.0015$.
Moreover, once we cross the threshold the robustness of the algorithm increases very quickly. We have also included a column
that represents the standard deviation of the random variable $d(\mathcal L(\A),\mathcal L(\A_E))$ to assess that our conclusions are robust w.r.t.
the probabilistic feature.

	\begin{table}[h]
		\centering
		\label{tbl:summaryBenchmarkNoisy}
		\scalebox{.8}{
		\begin{tabular}{l*{4}{r}}
			\toprule
			$ p $ &$d(\mathcal L(\A),\mathcal L(\A_E))$ & $d(\mathcal L(\A^{\rightarrow p}),\mathcal L(\A_E))$ &gain  & standard deviation\tabularnewline
			\midrule
			$0.01$ & $0.12625$ & $0.13320$ &\cellcolor{low} $0.07432$ & $0.04102$	\tabularnewline
			$0.005$ & $0.04420$ & $0.04827$ & \cellcolor{low} $0.11312$  & $0.03366$	\tabularnewline
			$0.0025$ & $0.00333$ & $0.00568$ &\cellcolor{low} $0.75031$ & $0.00523$ \tabularnewline
			$0.0015$ & $0.00027$ & $0.00174$ &\cellcolor{high} $5.52999$ & $0.00047$ \tabularnewline
			$0.001$ &  $0.00006$ & $0.00103$ &\cellcolor{high} $15.75817$& $0.00007$ \tabularnewline
			\bottomrule
		\end{tabular}
	}
		
		\caption{Evaluation of the algorithm w.r.t.\ the noisy output.}\label{tbl:noisyoutput}
	\end{table}

\paragraph{DFA with noisy input.}
We have generated forty-five random DFA, and for each such DFA $\A$,
we have generated random languages $\Lan(\A^{\leftarrow p})$ with $p\in\{10^{-4},5\cdot10^{-4},10^{-3},5\cdot10^{-3}\}$. 
Contrary to the case of noisy output, $p$ does not correspond to the expected value of $d(\mathcal L(\A),\mathcal L(\A^{\leftarrow p})$.
Thus we have evaluated this distance for every pair of the experiments and we have gathered the pairs whose distances belong 
to intervals that are described in the first column of Table~\ref{tbl:summaryBenchmarkNoisyInput}.
The second column of this table reports the number of pairs in the interval while the third one reports 
the average value of this distance for these pairs.
Again we identify a threshold for $d(\mathcal L(\A),\mathcal L(\A^{\leftarrow p}))$ between $0.001$ and $0.005$ and
once we cross the threshold the robustness of the algorithm increases very quickly.

	\begin{table}[h]
		\centering

		\scalebox{.8}{
		\begin{tabular}{l*{6}{r}}
			\toprule
			Range & \# &$d(\mathcal L(\A),\mathcal L(\A^{\leftarrow p}))$  &$d(\mathcal L(\A),\mathcal L(\A_E))$ &$d(\mathcal L\A^{\leftarrow p}),\mathcal L(\A_E))$ &gain& standard deviation\tabularnewline
			\midrule 
			$[0.025, 1]$ & $36$ & $0.04027$ & $0.21513$ & $0.22658$ & \cellcolor{low}$0.18$& 0.05279\tabularnewline
			$[0.005, 0.025]$ & $53$ & $0.00924$ & $0.05416$ & $0.06077$ &\cellcolor{low} $0.17$ & 0.04172\tabularnewline
			$[0.002,0.005]$ &33& $0.00378$ & $0.01260$ & $0.01611$ &\cellcolor{low} $0.30$ & 0.01783\tabularnewline
			$[0.001,0.002]$ & $11$ & $0.00123$ & $0.00030$ & $0.00154$ &\cellcolor{high} $4.1$ & 0.00058\tabularnewline
			$[0.0005,0.001]$ & $25$ & $0.00079$& $0.00002$ & $0.00082$&\cellcolor{high} $39.5$ & 0.00007\tabularnewline	
			\bottomrule
		\end{tabular}
	}

		\caption{Evaluation of the algorithm w.r.t.\ the noisy input.}\label{tbl:summaryBenchmarkNoisyInput}
	\end{table}

\paragraph{Counter DFA.}
We have randomly generated the counter function as follows:
We have uniformly chosen $c(\lambda)$ in $[0,|\Sigma|]$. Then,
for all $a \in \Sigma$, ${\bf Pr}(c(a)=-1)=\frac 1 4$ and for all $0\leq i \leq 6$, ${\bf Pr}(c(a)=i)=\frac 3 {28}$.

We have generated 160 DFA. For each of them, we have generated a counter automaton (as described before). 
The results of our experiments are given in Table~\ref{tbl:counterDFA}. Here whatever the quantity of noise the Angluin's algorithm is unable to
get closer to the original DFA. Moreover the extracted DFA $\A_E$ is very often closer to the counter automaton $\A_c$ than the original DFA $\A$. 
	
	\begin{table}[h]
		\centering
		\scalebox{.8}{
		\begin{tabular}{l*{6}{r}}
			\toprule
			Range& \# & $d(\mathcal L(\A),\mathcal L(\A_{c}))$  &$d(\mathcal L(\A),\mathcal L(\A_E))$ & 
			$d(\mathcal L(\A_c),\mathcal L(\A_E))$ &gain& standard deviation \tabularnewline
			\midrule
			$[0.005, 0.025]$ & $14$ & $0.01238$ & $0.02586$ & $0.02053$ & \cellcolor{low}$0.47886$ & $0.01898$ \tabularnewline
			$[0.002, 0.005]$ & $57$ & $0.00245$ & $0.00396$ & $0.00262$ & \cellcolor{low}$0.61765$ & $0.00298$ \tabularnewline
			$[0.001, 0.002]$ & $22$ & $0.00143$ & $0.00209$ & $0.00121$ & \cellcolor{low}$0.68156$ & $0.00126$ \tabularnewline
			$[0.0005, 0.001]$ & $20$ & $0.00079$ & $0.00108$ & $0.00064$ & \cellcolor{low}$0.72481$ & $0.00065$ \tabularnewline
			$[0.0001, 0.0005]$ & $44$ & $0.00025$ & $0.00035$ & $0.00021$ & \cellcolor{low}$0.71054$ & $0.00021$ \tabularnewline
			\bottomrule
		\end{tabular}
			}

		\caption{Evaluation of the algorithm w.r.t.\ the `noisy' counter.}\label{tbl:counterDFA}
	\end{table}

Thus we conjecture that when the noise is `unstructured' and the quantity is small enough such that the word noise
is still meaningful, then Angluin's algorithm is robust. On the contrary when the noise is structured, then Angluin's algorithm
`tries to learn' the noisy device whatever the quantity of noise. In Section~\ref{sec:theory}, we will strengthen this conjecture
establishing that in some sense noise produced by random process implies unstructured noise. 

\subsection{Words distribution }\label{subsection:wordDist}
We now discuss the impact of word distribution on the robustness of the Angluin algorithm.
The parameter $\mu$ determines the average length of a random word ($\frac{1}{\mu}-1$).
Table~\ref{tbl:wordDistrabution} summarizes experimental results with values of $\mu$ indicated on the first row. 
The other rows correspond to different values of the noise $p$ for $\A^{\rightarrow p}$. The cells (at the intersection of a pair ($p$,$\mu$)) contain the (average) information gain where experiments have been done over twenty-two DFA always eliminating the worst and best cases to avoid that the pathological cases perturb the average values. For values of $p$ that matter (i.e., when the gain is greater than 1), there is clear tendency for the gain to first increase w.r.t. $\mu$, reaching a maximum
about $\mu=0.01$ the value that we have chosen and then decrease. A possible explanation would be the following: too short words (i.e., big $\mu$) does not help
to discriminate between languages while  too long words (i.e., small $\mu$) lead to overfitting and does not reduce the noise.

\begin{table}[h]
	\centering
	\scalebox{.8}{
	\begin{tabular}{l*{5}{@{\hskip1em}r}}
		\toprule
		\backslashbox{$p$}{$\mu$} & $0.001$ & $0.005$ & $0.01$ & $0.05$ & $0.1$ \tabularnewline
		\midrule
		$0.01$  &\cellcolor{low}$0.059$ & \cellcolor{low}$0.067$ & \cellcolor{low}$0.078$ & \cellcolor{low}$0.184$ & \cellcolor{low}$0.317$ \tabularnewline
		$0.005$ & \cellcolor{low}$0.078$ & \cellcolor{low}$0.130$ & \cellcolor{low}$0.134$ & \cellcolor{low}$0.559$ & \cellcolor{low}$0.966$ \tabularnewline
		$0.0025$ & \cellcolor{low}$0.165$ & \cellcolor{low}$0.298$ & \cellcolor{low}$0.398$ &\cellcolor{medium}$1.246$ & \cellcolor{low}$0.823$ \tabularnewline
		$0.0015$ & \cellcolor{low}$0.465$ & \cellcolor{low}$0.671$ & \cellcolor{high}$2.267$ & \cellcolor{high}$2.074$ & \cellcolor{high}$1.651$ \tabularnewline
		$0.001$ & \cellcolor{high}$1.801$ & \cellcolor{high}$10.94$ & \cellcolor{high}$8.907$ & \cellcolor{high}$3.753$ & \cellcolor{high}$2.341$ \tabularnewline
		\bottomrule 
	\end{tabular}
}
	\caption{Analysis of different distributions on $\Sigma^*$ }\label{tbl:wordDistrabution}
\end{table}

	 
\section{Random languages versus structured languages}
\label{sec:theory}

Recall that in the precedent section, from the experimental results, we conjecture that Angluin’s algorithm is robust, when the noise is random, i.e., unstructured, and its quantity is small enough, such as for DFA with noisy output and with noisy input. This is however not the case for structured counter DFA, for which Angluin’s algorithm learns the noisy device itself instead of the original one whatever the quantity of noise.

In this section, we want to theoretically establish  that the main factor of the robustness
of the Angluin's L$^*$ algorithm w.r.t.\ random noise is that almost surely randomness, in most cases, 
yields the perturbated language that is unstructured. We consider a language as structured if it can
be produced by some general device. Thus we identify the family of structured languages
with the family of recursively enumerable languages. More precisely, we show that almost surely DFA with noisy output leads to a language that is not recursively enumerable. We then demonstrate further that with a mind condition, almost surely DFA with noisy input yields also non-recursively enumerable language. As for the counter DFA, by definition, it is clearly recursively enumerable, thus not being studied further. 

The following lemma gives a simple means to establish that almost surely 
a random language is not recursively enumerable.

\begin{lemma}
\label{lemma:nonre}
Let $R$ be a random language over $\Sigma$. Let $(w_n)_{n\in \nat}$ be a sequence of words of $\Sigma^*$.
Let $W_n=\{w_i\}_{i<n}$ and $\rho_n=\max_{W\subseteq W_n}{\bf Pr}(R \cap W_n=W)$.
Assume that $\lim_{n\rightarrow \infty} \rho_n=0$. 
Then, for all countable families of languages $\mathcal F$, almost surely $R \notin \mathcal F$.
In particular, almost surely $R$ is not a recursively enumerable language.
\end{lemma}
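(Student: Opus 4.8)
The plan is to reduce the statement to a one-language estimate and then close by countable subadditivity.

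First I would argue that it suffices to show ${\bf Pr}(R = L) = 0$ for every \emph{fixed} language $L \subseteq \Sigma^\ast$. The event $\{R = L\}$ equals $\bigcap_{w \in \Sigma^\ast}\{w \in R \Leftrightarrow w \in L\}$, a countable intersection of the independent atomic membership events, and is therefore measurable. Once each single-language probability vanishes, for any countable family $\mathcal F$ the union bound gives ${\bf Pr}(R \in \mathcal F) \le \sum_{L \in \mathcal F} {\bf Pr}(R = L) = 0$, whence almost surely $R \notin \mathcal F$.

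Next I would fix an arbitrary $L$ and perform a squeeze against $\rho_n$. For each $n$, if $R = L$ then in particular $R \cap W_n = L \cap W_n$, so $\{R = L\} \subseteq \{R \cap W_n = L \cap W_n\}$. Instantiating the set $W$ in the definition of $\rho_n$ with the particular subset $W = L \cap W_n \subseteq W_n$ yields ${\bf Pr}(R \cap W_n = L \cap W_n) \le \rho_n$. Combining, ${\bf Pr}(R = L) \le \rho_n$ holds for \emph{every} $n$; letting $n \to \infty$ and invoking the hypothesis $\lim_{n\to\infty}\rho_n = 0$ forces ${\bf Pr}(R = L) = 0$, which is exactly what the first paragraph needs.

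Finally, the \emph{in particular} clause follows by instantiating $\mathcal F$ with the family of recursively enumerable languages over $\Sigma$: this family is countable, being indexed by (a subset of) the countably many Turing machines, so the general statement applies directly.

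As for the main obstacle, the proof is essentially a squeeze and involves no hard computation; the only conceptual step is recognizing that the entire claim collapses onto the single-language bound ${\bf Pr}(R = L)\le \rho_n$, and the only technical point demanding care is the measurability of the events $\{R=L\}$ and $\{R\cap W_n = W\}$, which I would dispatch via the product structure of the underlying probability space on $\Sigma^\ast$ (independent Bernoulli membership events), noting that $\{R\cap W_n = W\}$ is a finite intersection and $\{R=L\}$ a countable one.
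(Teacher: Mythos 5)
Your proposal is correct and follows exactly the paper's own argument: for an arbitrary language $L$, bound ${\bf Pr}(R=L) \leq {\bf Pr}(R \cap W_n = L \cap W_n) \leq \rho_n$ for every $n$, conclude ${\bf Pr}(R=L)=0$, and then sum over the countable family $\mathcal F$. The only difference is that you make the measurability of the events and the countability of the recursively enumerable languages explicit, which the paper leaves implicit.
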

\begin{proof}
Let us consider an arbitrary language $L$.
Then, for all $n$, $ {\bf Pr}(R = L) \leq {\bf Pr}(R \cap W_n=L \cap W_n)\leq\rho_n$.
Thus, ${\bf Pr}(R =L)=0$ and ${\bf Pr}(R \in \mathcal F)=\sum_{L\in \mathcal F}{\bf Pr}(R =L)=0$.
\qed
\end{proof}

From Lemma~\ref{lemma:nonre}, we immediately obtain that almost surely
the noisy output perturbation of any language is not recursively enumerable.
The proofs of the two next theorems use the same notations as those given in Lemma~\ref{lemma:nonre}.
\begin{theorem}
\label{theorem:un}
	Let $L$ be a  language and $0<p<1$. Then almost surely $L^{\rightarrow p}$ is not a recursively enumerable language.
\end{theorem}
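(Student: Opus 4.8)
The plan is to apply Lemma~\ref{lemma:nonre} to the random language $R = L^{\rightarrow p}$, so the entire task reduces to exhibiting a sequence of words $(w_n)_{n \in \nat}$ for which $\rho_n \to 0$. First I would fix any enumeration $(w_n)_{n \in \nat}$ of $\Sigma^\ast$ without repetition (so that $W_n = \{w_i\}_{i<n}$ has exactly $n$ distinct words), and recall the key structural fact about noisy output: by definition of $\Lan(\A^{\rightarrow p})$, the events ``$w_i \in R$'' are mutually independent across distinct words $w_i$, and for each fixed word the probability of being in $R$ is either $p$ or $1-p$ depending on whether $w_i \in L$. Independence is exactly what makes $\rho_n$ tractable.

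Next I would compute $\rho_n = \max_{W \subseteq W_n} {\bf Pr}(R \cap W_n = W)$ explicitly. By independence, for any target set $W \subseteq W_n$ the probability ${\bf Pr}(R \cap W_n = W)$ factors as a product over the $n$ words in $W_n$, where each factor is the probability that $w_i$ lands on the correct side (in $W$ or not). Each individual factor is one of $p$, $1-p$, whichever is larger; hence every such product is bounded above by $\max(p,1-p)^n$, and this bound is attained by choosing $W$ to put each word on its more-likely side. Therefore $\rho_n = \max(p,1-p)^n$. Since $0 < p < 1$ forces $\max(p,1-p) < 1$, we get $\lim_{n\to\infty}\rho_n = 0$.

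With the hypothesis of Lemma~\ref{lemma:nonre} verified, the lemma immediately yields that for every countable family $\mathcal F$, almost surely $R \notin \mathcal F$; applying this to the family of recursively enumerable languages (which is countable, being indexed by Turing machine descriptions) gives that almost surely $L^{\rightarrow p}$ is not recursively enumerable, which is the claim.

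I do not expect a serious obstacle here: unlike the noisy-input case, the classification flips of distinct words are genuinely independent, so the bound $\rho_n = \max(p,1-p)^n$ drops out cleanly and geometric decay is automatic. The only point requiring a word of care is making the independence explicit and noting it holds regardless of whether each $w_i \in L$ — the value of $L$ affects only \emph{which} of $p,1-p$ is the larger factor for a given word, never the product bound $\max(p,1-p)^n$. Indeed this is why the theorem holds for an \emph{arbitrary} language $L$, not merely a regular one, and why the remark preceding the theorem already asserts that Lemma~\ref{lemma:nonre} applies ``immediately'' to the noisy output perturbation.
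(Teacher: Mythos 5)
Your proposal is correct and follows essentially the same route as the paper: fix an enumeration of $\Sigma^\ast$, use the independence of the membership events to bound $\rho_n \leq \max(p,1-p)^n$, and invoke Lemma~\ref{lemma:nonre}. The only difference is cosmetic — you show the bound is attained (so $\rho_n = \max(p,1-p)^n$), whereas the paper only needs and states the inequality.
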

\begin{proof}
Consider any enumeration  $(w_n)_{n\in \nat}$ of $\Sigma^*$ and any $W\subseteq W_n$.
The probability that $L^{\rightarrow p}\cap W_n$ is equal to $W$  is bounded
by $\max(p,1-p)^n$. Thus, $\rho_n\leq \max(p,1-p)^n$ and $\lim_{n\rightarrow \infty} \rho_n=0$.
\qed
\end{proof}

We cannot get a similar result for the noisy input perturbation. Indeed consider the language
$\Sigma^*$, whatever the kind of noise brought to the input, the obtained language is still $\Sigma^*$.
With the kind of input noise that we study, consider the language that accepts words of odd length 
(see the automaton $\A'$ of Figure~\ref{fig:inOutProp2}).
Then the perturbed language is unchanged. 

However given a DFA $\mathcal A$, we establish a mild condition
on $\mathcal A$ ensuring that almost surely the random language $\Lan(\mathcal A^{\leftarrow p})$ is not recursively enumerable.
We abbreviate bottom strongly connected components (of $\mathcal A$ viewed as a graph) by BSCC.

\begin{definition}
Let $\A = (Q,F,\sigma,q_0)$ be a DFA.
We call $\A$ \emph{equal-length-distinguishing} if
there exist (possibly identical) BSCC $\mathcal C, \mathcal C'$ of $\A$,
$q_1\in \mathcal C \cap F$, $q'_1\in \mathcal C' \setminus F$, and $w,w'\in \Sigma^*$ such that
we have $q_1 = \sigma(q_0,w)$, $q'_1 = \sigma(q_0,w')$, and $|w|=|w'|$.
\end{definition}

\begin{theorem}
\label{theorem:deux}
	Let $\Sigma$ be an alphabet with $|\Sigma|>1$.
	Let $\A = (Q,F,\sigma,q_0)$ be a DFA over $\Sigma$, $0<p<1$ and $\mathcal C, \mathcal C'$ some BSCC
	of $\A$ (possibly equal).
	Assume that $\A$ is equal-length-distinguishing.
	Then almost surely $\Lan(\A^{\leftarrow p})$ is not a recursively enumerable language.
\end{theorem}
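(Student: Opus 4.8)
The plan is to apply Lemma~\ref{lemma:nonre} to the random language $R = \Lan(\A^{\leftarrow p})$. It suffices to exhibit an infinite sequence of pairwise distinct words $(u_n)$ together with a constant $\eta \in (0,1/2]$ such that ${\bf Pr}(u_n \in R) \in [\eta, 1-\eta]$ for every $n$. Indeed, since the membership of each word in $R$ is drawn independently, for any $W \subseteq W_n$ we have ${\bf Pr}(R \cap W_n = W) = \prod_{i<n} {\bf Pr}(u_i \in R \text{ agrees with } W)$, and each factor is at most $1-\eta$; hence $\rho_n \le (1-\eta)^n \to 0$ and the lemma gives the result, the second sentence being immediate since the recursively enumerable languages form a countable family.

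Each $u_n$ will be built from a fixed prefix gadget followed by a long homogeneous suffix. For the prefix I would take the equal-length words $w,w'$ with $|w|=|w'|=m$, $\sigma(q_0,w)=q_1 \in \mathcal C \cap F$ and $\sigma(q_0,w')=q_1' \in \mathcal C'\setminus F$ supplied by the equal-length-distinguishing hypothesis; note $w\neq w'$ because $q_1 \neq q_1'$. The input noise maps the prefix $w$ to $w$ with probability $\alpha := (1-p)^m > 0$ and to $w'$ with probability $\beta \ge \bigl(\min(1-p,\tfrac{p}{|\Sigma|-1})\bigr)^m > 0$; these two events are disjoint, and both probabilities are independent of $n$. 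Thus, after reading the noisy prefix, the run sits at $q_1 \in \mathcal C$ with probability $\ge \alpha$ and at $q_1' \in \mathcal C'$ with probability $\ge \beta$.

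For the suffix I would fix any letter $b \in \Sigma$ and set $u_n := w\,b^n$. The crucial structural point is that $\mathcal C$ and $\mathcal C'$ are bottom SCCs, so once the run has entered one of them it never leaves, whatever the noisy letters are. Reading the noisy image of $b^n$ therefore induces a \emph{time-homogeneous} Markov chain on $\mathcal C$ (resp.\ $\mathcal C'$); since every letter occurs with positive probability under the noise and the component is strongly connected, this chain is irreducible. Writing $d,d'$ for the two periods, the standard convergence $P^{kd}(q,q)\to d\,\pi(q)>0$ shows that, for all large $n$ that are multiples of $\mathrm{lcm}(d,d')$, the chain started at $q_1$ returns to $q_1$ with probability $\ge c>0$ and the chain started at $q_1'$ returns to $q_1'$ with probability $\ge c'>0$, uniformly in $n$. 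Restricting $n$ to this infinite arithmetic progression (and reindexing), and using that the noise acts independently at each position so that the prefix and suffix events are independent, I obtain: with probability $\ge \alpha c$ the noisy run ends at $q_1 \in F$ (accept) and with probability $\ge \beta c'$ it ends at $q_1' \notin F$ (reject). Hence ${\bf Pr}(u_n \in R)\in[\eta,1-\eta]$ for $\eta := \min(\alpha c,\beta c')$, which is the required bound.

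The routine parts are the elementary estimates on $\alpha,\beta$ and the independence bound feeding into Lemma~\ref{lemma:nonre}. The step I expect to demand the most care is the suffix analysis: guaranteeing that the return probabilities inside a bottom SCC remain bounded away from $0$ along one common infinite set of suffix lengths, which forces us to track the periods of the two induced chains and work along multiples of their least common multiple. (The empty-suffix case already shows why a single word does not suffice, and the odd-length example of Figure~\ref{fig:inOutProp2} shows why the equal-length hypothesis is indispensable: without it the prefix cannot be noised into both $q_1$ and $q_1'$.)
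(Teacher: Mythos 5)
Your proof is correct and follows essentially the same route as the paper's: noise the prefix into $w$ (landing in $\mathcal C$) or $w'$ (landing in $\mathcal C'$), append a homogeneous one-letter suffix inducing irreducible time-homogeneous Markov chains on the two BSCCs, bound the return probabilities away from zero along lengths compatible with both periods, and conclude via geometric decay of $\rho_n$ and Lemma~\ref{lemma:nonre}. The only cosmetic differences are that you work along multiples of $\mathrm{lcm}(d,d')$ where the paper uses the product $mm'$ of the periods, and your lower bound $\bigl(\min(1-p,\tfrac{p}{|\Sigma|-1})\bigr)^m$ on the prefix-switching probability is in fact slightly more careful than the paper's stated $\min(p,1-p)^\ell$.
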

\begin{proof}
Let us denote  $\ell=|w|$ and $m$ (resp. $m'$) the periodicity of $\mathcal C$ (resp. $\mathcal C'$).
Moreover, let $a\in \Sigma$.
We build a Markov chain $\mathcal M$ from $\mathcal C$ as follows: every transition
	$q\xrightarrow{a}q'$ has probability $1-p$ and for all $b\neq a$, every transition $q\xrightarrow{b}q'$ has probability $\frac{p}{|\Sigma|-1}$.
	We proceed similarly from $\mathcal C'$ to build $\mathcal M'$.

Let us denote $\alpha_n$ (resp.  $\alpha'_n$ ) the probability in $\mathcal M$ (resp. $\mathcal M'$) that starting from $q_1$ 
	(resp.  $q'_1$), the current state at time $n$ is $q_1$ (resp. $q'_1$).
        Since $\mathcal M$ and $\mathcal M'$ are irreducible, $\lim_{n\rightarrow \infty} \alpha_{mn}$ (resp. $\lim_{n\rightarrow \infty} \alpha'_{m'n}$)
         exists and is positive. Let us denote $\alpha$ (resp. $\alpha'$) this limit. There exists $n_0$ such that for all $n\geq n_0$, 
         $\alpha_{mn}\geq \frac{\alpha}{2}$ and $ \alpha'_{m'n} \geq \frac{\alpha'}{2}$.
         
Define $w_n=wa^{mm'(n+n_0)}$ for all $n\in \N$. 
The probability that $w_n$ is accepted by $\Lan(\A)^{\leftarrow p}$ is lower bounded by  the probability 
         that the prefix $w$ is unchanged (thus reaching $q_1$) and that after $mm'(n+n_0)$ steps the current state in $\mathcal M$ is $q_1$.
         So a lower bound is: $\min(p,1-p)^\ell \frac{\alpha}{2}$.

The probability that $w_n$ is rejected by $\Lan(\A)^{\leftarrow p}$ is lower bounded by  the probability 
         that the prefix $w$ is changed into $w'$ (thus reaching $q'_1$) and that after $mm'(n+n_0)$ steps the current state in $\mathcal M'$ is $q'_1$.
         So a lower bound is: $\min(p,1-p)^\ell \frac{\alpha'}{2}$.
		
Let $W\subseteq W_n$. The probability that $L^{\leftarrow p}\cap W_n$ is equal to $W$  is upper bounded
	by: $$\left(1-\min(p,1-p)^\ell \frac{\min(\alpha,\alpha')}{2}\right)^n$$ 
	
Thus $\rho_n\leq \left(1-\min(p,1-p)^\ell \frac{\min(\alpha,\alpha')}{2}\right)^n$ and $\lim_{n\rightarrow \infty} \rho_n=0$.	
	\qed
\end{proof}

The DFA $\A$ of Figure~\ref{fig:inOutProp2} that represents the formula `$a \text{ Until } b$' of temporal logic LTL
is equal-length-distinguishing. The corresponding pair of states consists of the accepting state and the leftmost one.
Checking the hypotheses of this theorem can be done in quadratic time by first building a graph whose set of vertices is $Q\times Q$
and there is an edge $(q_1,q_2) \rightarrow (q'_1,q'_2)$ if there are some transitions $q_1 \xrightarrow{a_1} q'_1$ and 
$q_2 \xrightarrow{a_2} q'_2$ and then looking for a vertex $(q_1,q_2)$ in some BSCC with $q_1\in F$
and $q_2\notin F$ reachable from $(q_0,q_0)$.

\begin{figure}[h]
	\centering
	
	\includegraphics[scale=0.87]{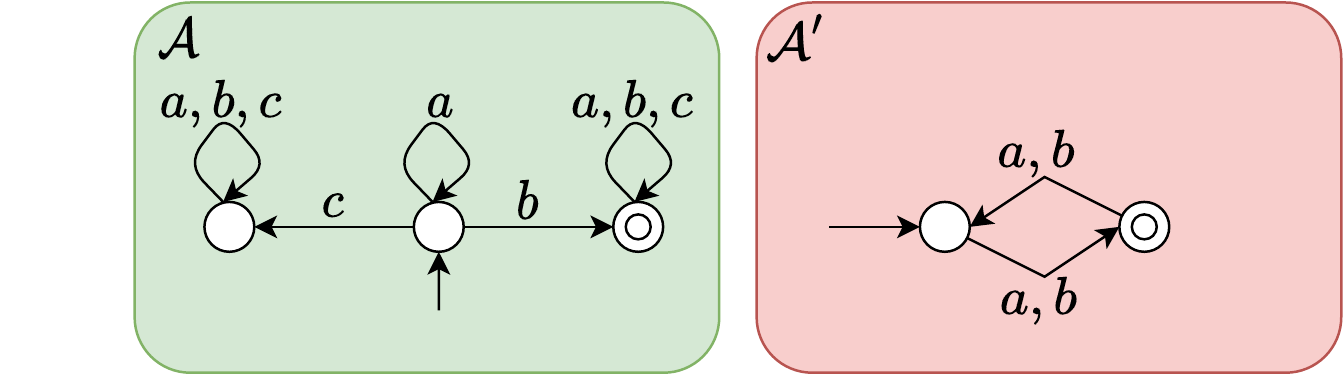}
	\caption{Two DFA}
	\label{fig:inOutProp2}
\end{figure}

We have that the property of being equal-length-distinguishing is a sufficient condition for ensuring that almost surely $\Lan(\A^{\leftarrow p})$ is not a recursively enumerable language. So we want to investigate whether it is a necessary condition.
The next proposition shows a particular case when this condition is necessary.

\begin{proposition}
	Let $\Sigma$ be an alphabet with $|\Sigma|>1$.
	Let $\A = (Q,F,\sigma,q_0)$ be a DFA that is not equal-length-distinguishing and such that every circuit of $\A$ belongs to a BSCC.
	Then, for every sampling $L'$ of $\Lan(\A^{\leftarrow p})$, $L'$ is regular.
\end{proposition}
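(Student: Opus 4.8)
The plan is to reduce the claim to two facts about the quantity $p_w := {\bf Pr}(w \in \Lan(\A^{\leftarrow p}))$: that $p_w \in \{0,1\}$ for all but finitely many words $w$, and that the language $A_1 := \{w \mid p_w = 1\}$ is regular. These suffice: writing $A_? := \{w \mid 0 < p_w < 1\}$, every sampling $L'$ satisfies $A_1 \subseteq L' \subseteq A_1 \cup A_?$ (words of probability $1$ are surely in, words of probability $0$ surely out), so if $A_?$ is finite then $L'$ differs from $A_1$ only on a finite set and is therefore regular whenever $A_1$ is.

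First I would determine the support of the noisy-input process. Since $0 < p < 1$ and $|\Sigma| > 1$, each letter is turned into any prescribed letter with strictly positive probability; hence, for an input $w$ with $|w| = n$, every word of $\Sigma^n$ occurs as a noisy output with positive probability. Consequently $p_w$ depends on $w$ only through the set $R_n := \{\sigma(q_0,u) \mid u \in \Sigma^n\}$ of states reachable in exactly $n$ steps: we have $p_w = 1$ iff $R_n \subseteq F$, $p_w = 0$ iff $R_n \cap F = \emptyset$, and $0 < p_w < 1$ otherwise.

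Next I would use the two structural hypotheses to control $R_n$ for large $n$. Because every circuit of $\A$ lies in a BSCC, the transient (non-BSCC) states induce an acyclic subgraph, so a run can stay among transient states for only boundedly many steps before entering a BSCC, which it then never leaves; thus there is a bound $T$ (at most the number of transient states) such that $R_n$ consists only of BSCC states for all $n \geq T$. Now the \emph{failure} of the equal-length-distinguishing property says precisely that no single length admits both a final BSCC state and a non-final BSCC state in $R_n$; together with the fact that $R_n$ consists only of BSCC states, this forces $R_n \subseteq F$ or $R_n \cap F = \emptyset$ for every $n \geq T$. Hence $0 < p_w < 1$ can occur only for $|w| < T$, so $A_?$ is finite.

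It remains to show that $A_1$ is regular. I would observe that $(R_n)_{n \in \N}$ is the orbit of $\{q_0\}$ under the map $X \mapsto \{\sigma(q,a) \mid q \in X,\, a \in \Sigma\}$ on the finite set $2^Q$, hence ultimately periodic; therefore $N_1 := \{n \mid R_n \subseteq F\}$ is an ultimately periodic subset of $\N$, and $A_1 = \{w \mid |w| \in N_1\}$ is recognized by a DFA tracking the length up to the pre-period and then cyclically modulo the period. The conceptual crux is the first observation: realizing that the input noise has full support on $\Sigma^n$ collapses $p_w$ to a property of $|w|$ alone, after which the DAG-of-transients bound and the negated equal-length-distinguishing condition pin down $R_n$ for large $n$. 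The remaining work—making $T$ explicit and transferring ultimate periodicity of $(R_n)$ to regularity of $A_1$—is routine once this reduction is in place.
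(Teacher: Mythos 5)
Your proof is correct and follows essentially the same route as the paper's: acyclicity of the transient part gives a bound beyond which only BSCC states are reachable, failure of the equal-length-distinguishing property then forces all words of any fixed large length to be classified alike, and hence every sampling decomposes into a finite set of short words plus a regular tail. The only divergence is in the final step and is cosmetic: where you prove regularity of $A_1$ from scratch via ultimate periodicity of the reachable-set sequence $(R_n)$, the paper observes more economically that on long words the sampling coincides with $\Lan(\A)$ itself (the noise preserves length, and classification at those lengths depends only on length), so the tail is simply $\Lan(\A)\cap\Sigma^{\geq n_0}$ and regularity is inherited directly from $\A$.
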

\begin{proof}
Pick some $n_0\in \N$ such that for all $w$ with $|w|\geq n_0$ and $q_0\xrightarrow{w} q$ implies that $q$ belongs to some BSCC.
Observe now that, since $\A$ is not equal-length-distinguishing, for  words $w,w'$ with $|w|=|w'|\geq n_0$,
$w\in L$ iff $w'\in L$. Thus, for every sampling $L'$ of $\Lan(\A^{\leftarrow p})$, $L'=(L' \cap \Sigma^{<n_0}) \cup (L\cap \Sigma^{\geq n_0})$
implying that $L'$ is regular.
\qed
\end{proof}
\begin{figure}[h]
\begin{footnotesize}
\begin{center}
\begin{tikzpicture}[xscale=0.9,yscale=0.99]
\begin{scope}[auto,every node/.style={draw,circle,minimum size=20pt}]

\path (3,0) node [draw,circle,inner sep=2pt](l3) {$q_0$};
\path (9,0) node [draw,circle,double,inner sep=2pt](lf) {$q_f$};
\path (6,0) node [draw,circle,inner sep=2pt](lr) {$q_r$};

\end{scope}
 
\draw[arrows=-latex'] (2,0) -- (l3);
\draw[arrows=-latex'] (l3) --(6,1.2)--(lf)node[pos=0,above] {$a$};
\draw[arrows=-latex'] (lf) --(6,-1.5)--(l3)node[pos=0,below] {$b$};
\draw[-latex'] (lf) .. controls +(-300:40pt) and +(-240:40pt) .. (lf)
node[pos=0.5,above] {$a$};
\draw[-latex'] (l3) .. controls +(-300:40pt) and +(-240:40pt) .. (l3)
node[pos=0.5,above] {$b$};
\draw[arrows=-latex'] (l3) --(lr)node[pos=0.5,above] {$c$};
\draw[arrows=-latex'] (lf) -- (lr)node[pos=0.5,above] {$c$};
\draw[-latex'] (lr) .. controls +(300:40pt) and +(240:40pt) .. (lr)
node[pos=0.5,above] {$\Sigma$};

\end{tikzpicture}
\end{center}
\end{footnotesize}
\caption{A DFA $\A$ with $\Lan(\A)=(a+b)^*a$}
\label{fig:counter-example}
\end{figure}
Observe that we establish the next proposition using a generalization of Lemma~\ref{lemma:nonre}.
\begin{proposition}
Let $\A$ be the DFA of Figure~\ref{fig:counter-example}.
Then, $\A$ is not equal-length-distinguishing. Moreover, almost surely $\Lan(\A^{\leftarrow \frac 2 3})$ is not recursively enumerable.
\end{proposition}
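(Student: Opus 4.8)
The plan is to treat the two assertions separately. For the claim that $\A$ is not equal-length-distinguishing, I would first compute the strongly connected components of $\A$ seen as a graph. The set $\{q_0,q_f\}$ is strongly connected, via $q_0 \xrightarrow{a} q_f$ and $q_f \xrightarrow{b} q_0$, but it is \emph{not} bottom, since from either state the letter $c$ leaves it for $q_r$. Hence the unique BSCC is $\{q_r\}$. As $q_r \notin F$, no BSCC meets $F$, so there is no way to choose a BSCC $\mathcal C$ together with a state $q_1 \in \mathcal C \cap F$. The defining condition of the property therefore fails outright, which settles the first assertion.

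For the second assertion I would exploit the specific value $p=\frac{2}{3}$ together with $|\Sigma|=3$. Here a letter is kept with probability $1-p=\frac{1}{3}$ and replaced by any given other letter with probability $\frac{p}{|\Sigma|-1}=\frac{1}{3}$, so each letter of $w$ becomes a uniformly distributed letter, independently; thus the perturbed word is uniform over $\Sigma^{|w|}$. Consequently ${\bf Pr}(w\in\Lan(\A^{\leftarrow\frac{2}{3}}))$ depends only on $n=|w|$ and equals $p_n$, the probability that a uniform word of length $n$ lies in $\Lan(\A)=(a+b)^*a$. A short count gives $p_0=0$ and $p_n=\frac{1}{3}\bigl(\frac{2}{3}\bigr)^{n-1}$ for $n\ge 1$, since the first $n-1$ letters must fall in $\{a,b\}$ and the last must be $a$.

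With these probabilities, and recalling that memberships of distinct words are decided independently, I would invoke Lemma~\ref{lemma:nonre} fed with the full enumeration $(w_n)$ of $\Sigma^*$. Since every $p_n\le\frac{1}{3}<\frac{1}{2}$, the optimal subset is the empty one, so $\rho_n=\prod_{w\in W_n}(1-p_{|w|})$, and this product tends to $0$ exactly when $\sum_{w\in\Sigma^*}p_{|w|}$ diverges. Grouping words by length, the length-$\ell$ words contribute $3^\ell p_\ell=2^{\ell-1}$, whence $\sum_{\ell\ge 1}2^{\ell-1}=\infty$ and $\rho_n\to 0$. The lemma then gives ${\bf Pr}(\Lan(\A^{\leftarrow\frac{2}{3}})\in\mathcal F)=0$ for every countable family $\mathcal F$, in particular for the recursively enumerable languages, which is the desired conclusion.

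The main obstacle, and the reason a plain reuse of Theorem~\ref{theorem:deux} fails, is precisely that $\A$ is not equal-length-distinguishing: with no accepting state inside a BSCC one cannot manufacture a sparse sequence of words each carrying a uniformly bounded-below acceptance/rejection balance, as was done there. Here the individual balances $\min(p_n,1-p_n)=p_n$ decay geometrically to $0$, so no one-word-per-length sequence works. The point that rescues the argument, and that I expect to be the delicate step requiring the generalized reading of Lemma~\ref{lemma:nonre}, is that the count $3^\ell$ of words of length $\ell$ grows fast enough to overwhelm the geometric decay of $p_\ell$, so that summing these vanishing balances over \emph{all} words still diverges.
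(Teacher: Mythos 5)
Your proof is correct, and its main part takes a genuinely different route from the paper's. For the first claim (the unique BSCC is $\{q_r\}$, which is non-accepting, so the property fails) you argue exactly as the paper does. For non-recursive-enumerability, however, the paper never computes marginal membership probabilities: it first shows, by a Borel--Cantelli-style estimate (the probability that \emph{all} perturbations of length-$n$ words contain a $c$ is at most $e^{-2^n}$), that almost surely infinitely many words $w$ have their perturbation $\tilde w$ in $(a+b)^+$; conditional on that event each such word lies in $\Lan(\A^{\leftarrow \frac 2 3})$ with probability exactly $\frac 1 2$, and the paper then runs the argument of Lemma~\ref{lemma:nonre} on this \emph{random} sequence of words, obtaining the bound $2^{-n}$ --- which is precisely why the paper announces that it uses a \emph{generalization} of the lemma, since the lemma as stated takes a fixed sequence $(w_n)$. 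You instead compute the exact marginals $p_n=\frac{1}{3}\bigl(\frac{2}{3}\bigr)^{n-1}$ (valid because $p=\frac23$ and $|\Sigma|=3$ make every perturbed letter uniform), invoke independence of memberships across distinct words (which is indeed built into the paper's definition of a random language), and apply the plain, unmodified Lemma~\ref{lemma:nonre} to the full enumeration of $\Sigma^*$: then $\rho_n=\prod_{w\in W_n}(1-p_{|w|})\to 0$ because $\sum_\ell 3^\ell p_\ell=\sum_\ell 2^{\ell-1}$ diverges. Your route is more economical, avoiding both the conditioning step and the extension of the lemma to random word sets; the paper's route buys a clean per-word balance of $\frac12$ and a random-subsequence technique that remains usable when summing marginals over all words is inconvenient. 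One small correction to your closing remark: your argument does \emph{not} rest on any ``generalized reading'' of Lemma~\ref{lemma:nonre} --- it fits the lemma verbatim; it is the paper's proof that needs the generalization. Your side observation is also accurate: since $\sum_\ell p_\ell = 1 < \infty$, any one-word-per-length sequence leaves $\rho_n$ bounded away from zero, so exploiting the $3^\ell$-fold multiplicity of each length (or the paper's conditioning) is genuinely necessary.
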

\begin{proof}
There is a single BSCC  with a single state $\{q_r\}$. So $\A$ is not equal-length-distinguishing.
Let $w\neq \lambda$ be a word  with $|w|=n$ and denote $\tilde{w}$ the random word obtained by the noisy perturbation. 
Observe that every letter of $\tilde{w}$ is uniformly distributed over $\Sigma$.
So the probability that $\tilde{w}$ does not contain a $c$ is $(\frac 2 3)^n$ and the conditional probability that $\tilde{w}$
belongs to $\Lan(\A^{\leftarrow \frac 2 3})$ knowing that it does not contain a $c$ is $\frac 1 2$.

\noindent
Fix some $0<\rho<1$. The probability that for all words $w\in \Sigma^n$, $\tilde{w}$ 
contains a $c$ is equal to $(1-(\frac 2 3)^n)^{3^n}\leq e^{-2^n}$. Pick an increasing sequence $(n_k)_{k\in \N}$ such $\sum_{k\in \N} e^{-2^{n_k}}\leq 1-\rho$.
Then with probability at least $\rho$, for all $k$, there is a word  $w_k\in \Sigma^{n_k}$ such that $\tilde{w}_k$ does not contain a $c$. Letting $\rho$
go to 1, almost surely there is an infinite number of words $w$ such that  $\tilde{w}\in (a+b)^+$. 

\noindent
Let us consider an arbitrary language $L'$ and $(w_n)_{n\in \N}$ be an enumeration of $\Sigma^+$.
Then almost surely there is an infinite number of $w_n$ such that $\tilde{w}_n$ belong to $(a+b)^+$. 
Recall that for such a word, the probability that it belongs to $\Lan(\A^{\leftarrow \frac 2 3})$ is equal to $\frac 1 2$.
Let $W_n$ be the \emph{random} set of the first $n^{th}$ such words.
Then for all $n$, $ {\bf Pr}(L' =\Lan(\A^{\leftarrow \frac 2 3})) \leq {\bf Pr}(L' \cap W_n=\Lan(\A^{\leftarrow \frac 2 3}) \cap W_n)=2^{-n}$.

Thus ${\bf Pr}(L' =\Lan(\A^{\leftarrow \frac 2 3}))=0$ and ${\bf Pr}(\Lan(\A^{\leftarrow \frac 2 3}) \in \mathcal F)=\sum_{L'\in \mathcal F}{\bf Pr}(L' =\Lan(\A^{\leftarrow \frac 2 3}))=0$
for $ \mathcal F$ a countable family of languages.
\qed
\end{proof}

To show the soundness of the structural criterion in Theorem~\ref{theorem:deux} with experiments and comparisons, we have refined our experiments on DFA with noisy inputs partitioning the randomly generated DFA depending on whether they are equal-length-distinguishing. 

We have chosen $|\Sigma| = 3$ since with greater size, it was difficult to generate DFAs that do not satisfy the hypotheses.
Tables~\ref{tbl:noisyWoSpecialProperty} and~\ref{tbl:noisySpecialProperty} summarize these experiments.
The last rows of the tables (where the information gain is greater than one) confirm our conjecture.

\begin{table}[h]
	\centering
	\scalebox{.8}{
	\begin{tabular}{l*{5}{@{\hskip1em}r}}
		\toprule
		Range& \# & $d(\mathcal L(\A),\mathcal L(\A^{\leftarrow p}))$  & $d(\mathcal L(\A),\mathcal L(\A_E))$ &  $d(\mathcal L(\A^{\leftarrow p})),\mathcal L(\A_E))$ &gain \tabularnewline
		\midrule 
		$[0.005, 0.025]$ & $85$ & $0.01114$ & $0.03604$ & $0.04345$ &\cellcolor{low} $0.30902$	\tabularnewline
		$[0.002, 0.005]$ & $81$ & $0.00338$ & $0.00421$ & $0.00747$ &\cellcolor{low} $0.80443$ \tabularnewline
		$[0.001, 0.002]$ & $25$ & $0.00142$ & $0.00035$ & $0.00174$ &\cellcolor{high} $4.09784$ \tabularnewline
		$[0.0005, 0.001]$ & $16$ & $0.00071$ & $0.00006$ & $0.00077$ &\cellcolor{high} $11.08439$
\tabularnewline	
		\bottomrule
	\end{tabular}
       }
	\caption{Experiments on equal-length-distinguishing DFA}
	\label{tbl:noisyWoSpecialProperty}
\end{table}

\begin{table}[h]
	\centering
	\scalebox{.8}{
	\begin{tabular}{l*{5}{@{\hskip1em}r}}
		\hline
		Range& \# & $d(\mathcal L(\A),\mathcal L(\A^{\leftarrow p}))$  & $d(\mathcal L(\A),\mathcal L(\A_E))$ &  $d(\mathcal L(\A^{\leftarrow p}),\mathcal L(\A_E))$ &gain \tabularnewline
		\midrule
		$[0.005, 0.025]$ & $36$ & $0.01089$ & $0.02598$ & $0.03410$ &\cellcolor{low} $0.41905$ \tabularnewline
		$[0.002, 0.005]$ & $49$ & $0.00308$ & $0.00387$ & $0.00646$ &\cellcolor{low} $0.79628$ \tabularnewline
		$[0.001, 0.002]$ & $35$ & $0.00136$ & $0.00057$ & $0.00182$ &\cellcolor{high} $2.39863$ \tabularnewline
		$[0.0005, 0.001]$ & $36$ & $0.00075$ & $0.00063$ & $0.00130$ &\cellcolor{medium} $1.18583$ \tabularnewline
		\bottomrule
	\end{tabular}
}
	\caption{Experiments on non equal-length-distinguishing DFA}
	\label{tbl:noisySpecialProperty}
\end{table}


\section{Conclusion}
\label{sec:con}

We have studied how the PAC-version of Angluin's algorithm behaves for devices which are obtained
from a DFA by introducing noise. More precisely, we have investigated whether Angluin's algorithm reduces the noise 
producing a DFA closer to the original one than the noisy device.
We have  considered three kinds of noise belonging either to random noise or to structured noise. We have shown that, on average, Angluin's algorithm
behaves well for random noise but not for structured noise. We have completed our study by establishing 
that almost surely the random noisy devices produce a non recursively enumerable language
confirming the relevance of the structural criterion for robustness of Angluin's algorithm.

There are several directions for future work.
First the algorithm could be tuned in a more precise way. In addition to  stop when the maximal number of rounds
is reached or the current automaton is declared equivalent, we could add early stopping when after some  stage with distance decreasing
the distance stabilizes. This would produce smaller DFA possibly closer to the original DFA.
At longer term,  Angluin's algorithm has no information about the original DFA. It would be interesting to introduce a priori knowledge
and design more efficient algorithms. For instance, the algorithm could take as input the maximal size of the original DFA or 
a regular language that is a superset of the original language.
In our setting the noise resulted in a noisy device which, once obtained, answers membership queries deterministically. A completely different form of noise would be that the answer to a query is randomly noisy meaning that for the same repeated query, 
different answers could occur.

\bibliographystyle{eptcs}
\bibliography{bib}

\begin{thebibliography}{10}
\providecommand{\bibitemdeclare}[2]{}
\providecommand{\surnamestart}{}
\providecommand{\surnameend}{}
\providecommand{\urlprefix}{Available at }
\providecommand{\url}[1]{\texttt{#1}}
\providecommand{\href}[2]{\texttt{#2}}
\providecommand{\urlalt}[2]{\href{#1}{#2}}
\providecommand{\doi}[1]{doi:\urlalt{https://doi.org/#1}{#1}}
\providecommand{\eprint}[1]{arXiv:\urlalt{https://arxiv.org/abs/#1}{#1}}
\providecommand{\bibinfo}[2]{#2}

\bibitemdeclare{article}{vanderalst12}
\bibitem{vanderalst12}
\bibinfo{author}{Wil M.~P. \surnamestart van~der Aalst\surnameend}
  (\bibinfo{year}{2012}): \emph{\bibinfo{title}{Process mining}}.
\newblock {\slshape \bibinfo{journal}{CACM}}
  \bibinfo{volume}{55}(\bibinfo{number}{8}), pp. \bibinfo{pages}{76--83},
  \doi{10.1145/2240236.2240257}.

\bibitemdeclare{article}{Angluin87}
\bibitem{Angluin87}
\bibinfo{author}{Dana \surnamestart Angluin\surnameend} (\bibinfo{year}{1987}):
  \emph{\bibinfo{title}{Learning Regular Sets from Queries and
  Counterexamples}}.
\newblock {\slshape \bibinfo{journal}{Inf. Comput.}}
  \bibinfo{volume}{75}(\bibinfo{number}{2}), pp. \bibinfo{pages}{87--106},
  \doi{10.1016/0890-5401(87)90052-6}.

\bibitemdeclare{article}{AngluinL87}
\bibitem{AngluinL87}
\bibinfo{author}{Dana \surnamestart Angluin\surnameend} \&
  \bibinfo{author}{Philip~D. \surnamestart Laird\surnameend}
  (\bibinfo{year}{1987}): \emph{\bibinfo{title}{Learning From Noisy Examples}}.
\newblock {\slshape \bibinfo{journal}{Mach. Learn.}}
  \bibinfo{volume}{2}(\bibinfo{number}{4}), pp. \bibinfo{pages}{343--370},
  \doi{10.1023/A:1022873112823}.

\bibitemdeclare{incollection}{Biermann72}
\bibitem{Biermann72}
\bibinfo{author}{Alan~W. \surnamestart Biermann\surnameend} \&
  \bibinfo{author}{Jerome~A. \surnamestart Feldman\surnameend}
  (\bibinfo{year}{1972}): \emph{\bibinfo{title}{A survey of results in
  grammatical inference}}.
\newblock In \bibinfo{editor}{S.~\surnamestart Watanabe\surnameend}, editor:
  {\slshape \bibinfo{booktitle}{Frontiers of Pattern Recognition}},
  \bibinfo{publisher}{Academic Press, New York}, pp. \bibinfo{pages}{31--54},
  \doi{10.1016/B978-0-12-737140-5.50007-5}.

\bibitemdeclare{book}{Cassandras10}
\bibitem{Cassandras10}
\bibinfo{author}{Christos~G. \surnamestart Cassandras\surnameend} \&
  \bibinfo{author}{Stephane \surnamestart Lafortune\surnameend}
  (\bibinfo{year}{2010}): \emph{\bibinfo{title}{Introduction to Discrete Event
  Systems}}.
\newblock \bibinfo{publisher}{Springer Publishing Company, Incorporated},
  \doi{10.1007/978-0-387-68612-7}.

\bibitemdeclare{article}{Gold1978}
\bibitem{Gold1978}
\bibinfo{author}{E~Mark \surnamestart Gold\surnameend} (\bibinfo{year}{1978}):
  \emph{\bibinfo{title}{Complexity of automaton identification from given
  data}}.
\newblock {\slshape \bibinfo{journal}{Information and Control}}
  \bibinfo{volume}{37}(\bibinfo{number}{3}), pp. \bibinfo{pages}{302 -- 320},
  \doi{10.1016/S0019-9958(78)90562-4}.

\bibitemdeclare{article}{Hoff63}
\bibitem{Hoff63}
\bibinfo{author}{Wassily \surnamestart Hoeffding\surnameend}
  (\bibinfo{year}{1963}): \emph{\bibinfo{title}{Probability Inequalities for
  Sums of Bounded Random Variables}}.
\newblock {\slshape \bibinfo{journal}{Journal of the American Statistical
  Association}} \bibinfo{volume}{58}(\bibinfo{number}{301}), pp.
  \bibinfo{pages}{13--30}, \doi{10.2307/2282952}.

\bibitemdeclare{article}{Kearns98}
\bibitem{Kearns98}
\bibinfo{author}{Michael~J. \surnamestart Kearns\surnameend}
  (\bibinfo{year}{1998}): \emph{\bibinfo{title}{Efficient Noise-Tolerant
  Learning from Statistical Queries}}.
\newblock {\slshape \bibinfo{journal}{J. {ACM}}}
  \bibinfo{volume}{45}(\bibinfo{number}{6}), pp. \bibinfo{pages}{983--1006},
  \doi{10.1145/293347.293351}.

\bibitemdeclare{book}{Kearns94}
\bibitem{Kearns94}
\bibinfo{author}{Michael~J. \surnamestart Kearns\surnameend} \&
  \bibinfo{author}{Umesh~V. \surnamestart Vazirani\surnameend}
  (\bibinfo{year}{1994}): \emph{\bibinfo{title}{An Introduction to
  Computational Learning Theory}}.
\newblock \bibinfo{publisher}{{MIT} Press},
  \doi{10.7551/mitpress/3897.001.0001}.

\bibitemdeclare{incollection}{Quinlan86}
\bibitem{Quinlan86}
\bibinfo{author}{J.~R. \surnamestart Quinlan\surnameend}
  (\bibinfo{year}{1986}): \emph{\bibinfo{title}{The Effect of Noise on Concept
  Learning}}.
\newblock In: {\slshape \bibinfo{booktitle}{Machine Learning, An Artificial
  Intelligence Approach Volume II}}, chapter~\bibinfo{chapter}{6},
  \bibinfo{publisher}{Morgan Kaufmann}, pp. \bibinfo{pages}{149--166}.

\bibitemdeclare{article}{Solomonoff64}
\bibitem{Solomonoff64}
\bibinfo{author}{Ray~J. \surnamestart Solomonoff\surnameend}
  (\bibinfo{year}{1964}): \emph{\bibinfo{title}{A Formal Theory of Inductive
  Inference}}.
\newblock {\slshape \bibinfo{journal}{Inf. Control.}}
  \bibinfo{volume}{7}(\bibinfo{number}{1, 2}), pp. \bibinfo{pages}{1--22,
  224--254}, \doi{10.1016/S0019-9958(64)90223-2}.

\bibitemdeclare{article}{Valiant84}
\bibitem{Valiant84}
\bibinfo{author}{Leslie~G. \surnamestart Valiant\surnameend}
  (\bibinfo{year}{1984}): \emph{\bibinfo{title}{A Theory of the Learnable}}.
\newblock {\slshape \bibinfo{journal}{Commun. {ACM}}}
  \bibinfo{volume}{27}(\bibinfo{number}{11}), pp. \bibinfo{pages}{1134--1142},
  \doi{10.1145/1968.1972}.

\bibitemdeclare{article}{Wharton1974}
\bibitem{Wharton1974}
\bibinfo{author}{R.~M. \surnamestart Wharton\surnameend}
  (\bibinfo{year}{1974}): \emph{\bibinfo{title}{Approximate language
  identification}}.
\newblock {\slshape \bibinfo{journal}{Information and Control}}
  \bibinfo{volume}{26}(\bibinfo{number}{3}), pp. \bibinfo{pages}{236 -- 255},
  \doi{10.1016/S0019-9958(74)91369-2}.

\end{thebibliography}

\end{document}